\documentclass[
 reprint,
 amsmath,amssymb,
 pra,superscriptaddress, 
 nofootinbib
]{revtex4-2}
\usepackage[caption=false]{subfig}
\usepackage{amsmath}
\usepackage{amsthm}
\usepackage{amssymb}

\usepackage{tabularx}
\usepackage{graphicx}
\usepackage{dsfont}
\usepackage{multirow}
\usepackage{framed}
\usepackage{enumitem}
\usepackage{hyperref}
\usepackage{tikz}
\usepackage{tikz-cd}
\usepackage{mathtools}
\usepackage[framemethod=tikz]{mdframed}
\usetikzlibrary{calc}

\theoremstyle{plain}
\newtheorem*{theorem*}{Theorem}
\newtheorem{theorem}{Theorem}
\newtheorem{result}{Result}
\newtheorem{lemma}{Lemma}

\newtheorem{definition}{Definition}
\newtheorem{corollary}{Corollary}[theorem]

\newcommand{\defeq}{\vcentcolon=}
\newcommand{\eqdef}{=\vcentcolon}
\newcommand{\poly}{\mathrm{poly}}
\newcommand{\rank}{\mathrm{rank}}
\usepackage{verbatim}

\begin{document}

\preprint{APS/123-QED}

\title{Duality constrains optimal thresholds in quantum error correction}

\author{Lucas~H.~English}
\email{lucas.english@sydney.edu.au}
\affiliation{%
School of Physics, University of Sydney, New South Wales 2006, Australia
}
\author{Haoyuan~Luo}
\affiliation{%
School of Physics, University of Sydney, New South Wales 2006, Australia
}%
\author{YangMing~Wang}
\affiliation{%
School of Physics, University of Sydney, New South Wales 2006, Australia
}%
\author{Basudha~Srivastava}
\affiliation{%
Quantinuum, Terrington House, Cambridge, CB2 1NL, UK
}%
\author{Stephen~D.~Bartlett}
\affiliation{%
School of Physics, University of Sydney, New South Wales 2006, Australia
}%
\author{Dominic~J.~Williamson}
\affiliation{%
School of Physics, University of Sydney, New South Wales 2006, Australia
}%

\date{\today}

\begin{abstract}
Error correction thresholds are often treated as the primary figure of merit for comparing quantum error-correcting code families.
We show that the optimal error correction threshold for many commonly considered codes is constrained to a single universal value at leading order in a replica limit.
Through a statistical mechanical mapping, we demonstrate that duality constrains all zero-rate \emph{em}-symmetric CSS codes to have the same optimal code capacity threshold.
Here, \emph{em} symmetry means that the X- and Z-type parity-check matrices are equivalent up to row and column permutations.
Under this statistical mechanical mapping, \emph{em}-symmetric CSS codes are self-dual under a generalized Kramers-Wannier duality up to a mixing of logical sectors.
For zero-rate code families, this mixing contributes only subextensive corrections, so the thermodynamic bulk free energy is self-dual in the trivial logical sector. 
This self-duality fixes the clean critical point and constrains the disordered phase boundary. We also show that self-duality is preserved under code concatenation, and that optimal decoding of concatenated codes can be reformulated as a renormalization group flow on a hierarchical lattice. 
Our results provide a common framework for analyzing topological, concatenated, and more general quantum low-density parity-check code families, including both their optimal code capacity thresholds and their sub-threshold logical error suppression.
\end{abstract}

\maketitle

\section{Introduction}

Fault-tolerant quantum computation is underpinned by families of quantum error-correcting codes, which provide arbitrary suppression of logical errors as a function of code block size \cite{Shor_96,Dennis_02}.
This arbitrary suppression is only possible for physical error rates below a critical value known as the threshold \cite{Aharonov_97, Knill_98, Dennis_02}.
How far the physical error rate is below threshold determines the scaling of logical error suppression with increasing code block size \cite{PhysRevA.88.062308,smith_mitigating_2024}.
Thresholds therefore play a central role in comparing quantum error-correcting code families, and much of the architecture-level discussion of fault tolerance is naturally framed as a search for codes with higher thresholds~\cite{Acharya2025,Bravyi2024,PhysRevLett.103.090501,PhysRevLett.102.200501}.

The code capacity threshold of a code, which describes error correction in a setting where data qubits are subjected to noise but stabilizer check measurements are error-free, can be mapped to the critical point of a disordered statistical mechanical model \cite{Dennis_02,Chubb_2021}. The phase diagrams of the statistical mechanical models associated with the toric and color codes under both bit-flip and depolarizing code capacity noise have been found to agree~\cite{Katzgraber_2013,PhysRevX.2.021004}. We find that this coincidence extends beyond these two code families. The central empirical puzzle motivating this work is illustrated in Fig.~\ref{fig:phase_diagram}. The phase boundary of the toric/planar code (described by the random-bond Ising model under the statistical mechanical mapping)~\cite{Dennis_02,Wang_03,PhysRevB.65.054425}, together with those of the concatenated Steane code and concatenated
surface-17 code \cite{PhysRevA.90.062320}, nearly coincide, and their intersections with the Nishimori line \cite{Nishimori_1981} lead to similar optimal thresholds.  What explains this coincidence for these very different code families?

In this work, we show that this shared behaviour arises from a common self-duality in the associated statistical mechanical models.
We consider Calderbank-Shor-Steane (CSS) codes~\cite{PhysRevA.54.1098,PhysRevLett.77.793,10.1098/rspa.1996.0136} whose X- and Z-type parity-check matrices can be row and column permuted into one another. We say that these codes possess an \emph{em} symmetry~\cite{PhysRevB.110.085158}. 
We show that the statistical mechanical phase boundaries for \emph{em}-symmetric CSS codes with asymptotically zero rate are strongly constrained by duality~\cite{10.1063/1.1665530}. This explains why topological codes, concatenated codes, and certain structured quantum low-density parity-check (qLDPC) families including the abelian two-block group algebra (A2BGA) codes \cite{PhysRevA.109.022407} can exhibit remarkably similar optimal code capacity thresholds despite having very different microscopic constructions.

\begin{figure}[t]
    \centering
    \includegraphics[width=\linewidth]{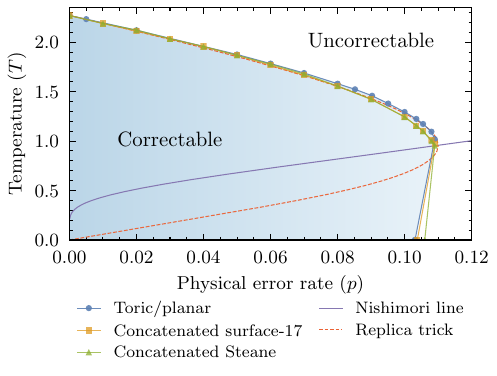}
    \caption{Phase diagram for concatenated and topological QEC codes under bit-flip noise in the code capacity setting. The phase boundary is shown for the toric/planar code family (blue; data taken from Refs.~\cite{PhysRevB.65.054425,Wang_03}), concatenated surface-17 code (orange) and concatenated Steane code (green). The dashed red line indicates the phase boundary prediction obtained via the replica trick. We obtain the phase boundaries for the concatenated codes in this work. The intersection of the Nishimori line with each phase boundary determines the corresponding optimal error threshold. The shaded region denotes the stable ordered phase, in which quantum error correction is feasible and logical failures are arbitrarily suppressed with code size.}
    \label{fig:phase_diagram}
\end{figure}

\subsection{Summary of main results}

Our main results can be summarized as follows. First, we show that a common symmetry of the CSS parity-check matrices strongly constrains the corresponding code capacity decoding problem. Specifically, we show that \emph{em}-symmetric CSS codes are self-dual under a generalized Kramers-Wannier duality~\cite{10.1063/1.1665530,PhysRevA.97.062320} up to mixing of logical sectors. For zero-rate families, this sector mixing is subextensive and vanishes in the thermodynamic limit, so the corresponding clean statistical mechanical models, where all coupling constants are homogeneous, are thermodynamically self-dual. In the decoding problem, this clean model corresponds to the fully postselected limit, in which one aborts and reinitializes the code whenever any nontrivial syndrome is measured \cite{English_2025}. As a consequence, if the transition is unique, the clean critical point is pinned to the self-dual temperature \cite{onsager_crystal_1944,kramers_statistics_1941}.

Second, we show that this self-dual constraint extends from the clean, fully postselected limit to the usual code capacity threshold. For zero-rate \emph{em}-symmetric code families, the principal Boltzmann factor construction~\cite{ohzeki2008duality,PhysRevE.77.061116,PhysRevE.79.021129} constrains the disordered phase boundary above the Nishimori line, yielding an approximate prediction for the optimal code capacity threshold. We extend this framework to mixed Pauli and erasure noise by identifying qubit erasures with bond dilution in the associated statistical mechanical model. This gives a corresponding prediction for the optimal mixed bit-flip and erasure thresholds.

Third, we show that the duality constraint on the threshold applies to concatenated code constructions. In particular, finite-size Kramers-Wannier self-duality is preserved under concatenation of codes which encode a single logical qubit. In the statistical mechanical mapping, optimal decoding of concatenated codes admits a hierarchical formulation, and in the fully postselected limit reduces to an exact real-space renormalization group flow on the associated generalized Ising model on a hierarchical lattice.

Finally, we show that for codes with thresholds that are constrained by duality, the sharper distinctions between code families lie in finite-size performance rather than threshold values. Comparing topological and concatenated growth, we show that the relevant tradeoff is between distance scaling and the entropy of minimum-weight logical operators: topological families have only polynomially many minimum-weight logicals, whereas concatenated families generically have exponentially many (in the code distance). This produces a finite-size crossover in physical overhead, rather than a universal threshold advantage for either code family.

\subsection{Historical introduction and context}

The foundation of fault-tolerant quantum computation is quantum error correction (QEC), whereby logical quantum information is encoded redundantly across many physical qubits. Early seminal works by Shor \cite{Shor_95} and Steane \cite{PhysRevLett.77.793,Steane_96b}, along with the development of the stabilizer formalism \cite{Gottesman_97}, laid the theoretical foundations for QEC. The first fault-tolerant construction involved the use of concatenated quantum codes \cite{Aharonov_97, Knill_96, Zalka_97, Knill_98}. Concatenation consists of recursively encoding QEC codes, such that at each level, the physical subspace (of the full system's Hilbert space) of the inner code is identified with the logical subspace of the outer code \cite{Nielsen_12,preskill1998lecture,gottesman2024}. %Later
Other developments towards fault-tolerant quantum computation include topological QEC codes \cite{Kitaev_03,bravyi1998,Dennis_02,Bombin_06}, which reduced the physical qubit overhead and qubit connectivity requirements through geometric locality. More recently, qLDPC codes relax geometric locality in order to overcome the Bravyi-Poulin-Terhal bound \cite{PhysRevLett.104.050503}, while still achieving low degree (bounded, but non-local) qubit connectivity with potentially asymptotically good code parameters, i.e., code rate and distance scaling \cite{10.1145/3519935.3520017,10.1145/3564246.3585101,Leverrier_2022}.

The power of quantum code concatenation was cemented by the quantum threshold theorem \cite{Aharonov_97, Knill_98, Kitaev_03, Dennis_02}. This theorem established that arbitrarily reliable quantum computation is possible if the error rate per physical qubit operation $p$ is below a critical value called the threshold $p_{c}$. For concatenated coding schemes, with $p<p_{c}$, the logical failure rate $\mathbb{P}_{\mathrm{fail}}$ can be arbitrarily suppressed with sufficiently large numbers of concatenation levels, with the overall space-time resource overhead scaling as $\mathcal{O}(\mathrm{poly}[\log(A/\epsilon)]A)$, where $\mathbb{P}_{\mathrm{fail}}\leq \epsilon$ and $A$ defines the necessary resources without noise \cite{PhysRevLett.117.010501,Nielsen_12}. The existence of a nonzero threshold proved that fault-tolerant quantum computation is physically possible, provided physical error rates can be made sufficiently low. Initial rigorous proofs of the threshold theorem relied heavily on the structure of concatenated codes \cite{Aharonov_97,Kitaev_1997,Knill_98,Aliferis_2006}. It was later shown that optimal maximum likelihood decoding (MLD) of concatenated codes can also be performed efficiently using a message-passing algorithm \cite{PhysRevA.74.052333}. Recently, concatenated code schemes have experienced a resurgence in the context of constant-overhead quantum computation \cite{Yamasaki2024,Yoshida2025,gidney2025}. The performance of concatenated codes can also be analyzed through a mapping to statistical mechanical models on trees \cite{Yadavalli2025,PhysRevResearch.7.023086}.

Despite their theoretical importance, concatenated codes often suffer from large resource overheads and qubit connectivity requirements that increase with concatenation level \cite{preskill1998lecture,cao2025,xbzn-vn37}. Although the stabilizer groups of concatenated codes typically contain non-local generators, fault-tolerant computation with such codes can still be implemented using local gates \cite{Gottesman2000,litinski2025,xbzn-vn37}. Nevertheless, these considerations spurred the development of alternative QEC strategies, most notably topological quantum codes \cite{Kitaev_03, Dennis_02, Bombin_06}.
Topological codes address the issues of large resource overheads and qubit connectivity requirements by combining a qLDPC structure, in which each stabilizer generator has bounded weight and each qubit participates in a bounded number of checks, with geometric locality, so that these checks can be measured using local interactions on a 2D or higher dimensional layout.
Topological codes encode information in global degrees of freedom of a many-body quantum system. Some examples of topological codes are the well-known surface codes \cite{Kitaev_03, bravyi1998, Fowler_12} and color codes \cite{Bombin_06,PhysRevLett.103.090501}.
This combination of local checks, bounded qubit degree, and spatially local syndrome extraction makes topological codes a natural fit for hardware architectures with predominantly nearest-neighbour interactions \cite{Bomb_n_2013}.

The structure of topological codes was later discovered to admit exact mappings to well-studied disordered classical statistical models, in which the threshold of the code separates an ordered (correctable) and disordered (uncorrectable) phase \cite{Dennis_02,Wang_03,Chubb_2021}. These statistical mechanical mappings have motivated efficient optimal decoding algorithms \cite{PhysRevA.90.032326, PhysRevLett.134.190603}. The thresholds of generic codes with translationally invariant, geometrically local stabilizers can also be analyzed through different information measures as in Refs.~\cite{PhysRevB.110.085158,lyons2024}.

More generally, qLDPC codes are QEC code families in which the stabilizer generator weight and qubit degree remain bounded as the code size grows \cite{PRXQuantum.2.040101}. By allowing stabilizer generators to act on qubits non-locally, qLDPC codes can substantially improve encoding efficiency \cite{PRXQuantum.2.040101,PhysRevLett.104.050503,Baspin2022connectivity}. Recent progress has made this direction especially compelling: explicit product-based constructions have improved achievable rate-distance tradeoffs \cite{Tillich_2014,9490244}, asymptotically good qLDPC families are now known \cite{10.1145/3519935.3520017,10.1145/3519935.3520024,Leverrier_2022}, and recent bivariate bicycle (BB) code proposals have shown that qLDPC families can also be competitive at practically relevant finite sizes and circuit-level noise rates \cite{Bravyi2024}. Among the qLDPC landscape, a particularly relevant class for the present work is given by A2BGA codes \cite{PhysRevA.109.022407} and related BB code constructions \cite{Bravyi2024}, which provide CSS families with structure that allows the results developed in this work to be applied.

\subsection{Organization of the manuscript}

The remainder of the manuscript is structured as follows.
In Section~\ref{sec:background} we review the statistical mechanical mapping of QEC, introduce the generalized Kramers-Wannier duality, and summarize how principal Boltzmann factor arguments constrain the phase boundary of disordered models.
In Section~\ref{sec:symmetry} we formulate a finite-size notion of Kramers-Wannier self-duality for CSS codes, show that zero-rate \emph{em}-symmetric families are self-dual in the thermodynamic limit, and prove that generalized Kramers-Wannier self-duality is preserved under concatenation. Together, these results explain the observed near equivalence of phase diagrams across the topological, concatenated and qLDPC families we study.
In Section~\ref{sec:concat} we first show that optimal decoding of concatenated codes admits a natural interpretation as an exact real-space renormalization group flow on the associated hierarchical statistical mechanical model, and we connect this picture to Poulin's efficient message-passing decoder \cite{PhysRevA.74.052333}. We then compare the optimal performance of topological and concatenated codes, and explain the differences observed through an entropic  argument.
In Section~\ref{sec:BB_codes}, we describe the conditions under which more general qLDPC codes, specifically A2BGA codes, are also expected to have their thresholds constrained by self-dual arguments. 
We conclude in Section~\ref{sec:discussion} by outlining directions for future work.

\section{Background}\label{sec:background}

In this section, we summarize the formalism and results from the literature required to prove our main result. First, we briefly review the statistical mechanical mapping of decoding Pauli stabilizer codes \cite{Dennis_02,Wang_03,Chubb_2021}. We then introduce the generalized Kramers-Wannier duality of generalized Ising models \cite{10.1063/1.1665530}, including the effects of electric and magnetic insertions \cite{PhysRevA.97.062320}, and define generalized Kramers-Wannier self-duality. Finally, we introduce the replica framework to predict phase diagrams of self-dual generalized Ising models \cite{ohzeki2008duality,PhysRevE.79.021129,PhysRevE.77.061116,Takeda_2005}.

To speak meaningfully about thresholds, one must consider a sequence of codes $\{\mathcal{C}_{t}\}_{t\in\mathbb{N}}$ whose block length $n_{t}$ tends to infinity. A necessary condition for a nonzero asymptotic threshold is that the code distance $d_{t}$ also diverges with block length \cite{PhysRevLett.115.050502}. Otherwise, the family can correct only a bounded number of errors, and the logical failure probability cannot vanish at any fixed nonzero physical error rate. Families of qLDPC codes with distances that grow either as a power law (with positive exponent) or logarithmic in the block length admit finite thresholds under suitable noise models~\cite{PhysRevA.87.020304}.

Threshold values depend on the noise model under consideration~\cite{PhysRevA.94.042338,PhysRevResearch.6.L042014}.
Noise processes affecting the physical qubits can be modeled in different ways, including coherent and incoherent noise models.
Coherent noise models represent physical noise through a unitary channel~\cite{Iverson_2020}, and may require specific modeling assumptions or noise-tailoring procedures~\cite{PhysRevLett.131.060603,PhysRevA.94.052325,PhysRevLett.121.250502}.
Noise can also be modeled through stochastic Pauli error channels acting on the physical qubits \cite{PhysRevLett.121.190501}, and such incoherent channels can be obtained from coherent channels via the Pauli twirling approximation \cite{Katabarwa2015}.
Incoherent noise models are typically classified into three settings, depending on the locations where potential faults are assigned~\cite{PhysRevResearch.6.L042014}.
In the code capacity setting, errors occur only on the data qubits and syndrome measurements are assumed to be perfect~\cite{Dennis_02}. Phenomenological noise models include both data qubit errors and noisy syndrome measurements \cite{Wang_03}, while circuit-level noise models assign faults to the elementary operations used for state preparation, gates, idle periods and measurements~\cite{Fowler_12}. These increasingly realistic models typically reduce the threshold compared to the code capacity setting. For this reason, code capacity thresholds are best understood as idealized benchmarks that provide a useful first approximation for comparing code families, rather than direct predictions of hardware-level performance.

The threshold, however, captures only the location of the phase transition \cite{Dennis_02,Chubb_2021}. Distinct code families may share the same threshold \cite{PhysRevX.2.021004,Katzgraber_2013} while exhibiting markedly different logical failure suppression below threshold. With experimental platforms now realizing sub-threshold physical error rates \cite{Acharya2025,rqkg-dw31, Bluvstein2026,computing2026}, recent literature has emphasized the importance of sub-threshold scaling of QEC codes \cite{Acharya2025,Lacroix2025,Eickbusch2025}.

The threshold $p_{c}$ admits a natural interpretation within the statistical mechanical mapping of decoding, which we now briefly review.

\subsection{Statistical mechanics mapping of QEC}

The statistical mechanical mapping of Pauli stabilizer codes represents the relative likelihood of different logical cosets as a ratio of partition functions in different sectors of an associated generalized Ising model \cite{Dennis_02,Chubb_2021}. Under this mapping, optimal decoding corresponds to selecting the logical sector with minimal free energy, and the decoding threshold is identified with an order-disorder transition of the model. 
The free energy cost of inserting a nontrivial logical operator provides a useful order parameter to characterize this phase transition \cite{Chubb_2021}. In this work, we restrict our attention to code capacity bit-flip noise for analytic tractability. However, the mapping and analysis used can be extended to much broader classes of noise channels \cite{PhysRevX.2.021004,Chubb_2021,rispler2024}.

Under this mapping, we associate an Ising spin $S_{i}$ to each stabilizer generator $\hat{S}_{i}$. Each physical qubit $b$ becomes an interaction term (or ``bond'') in the Hamiltonian that couples together the spins associated to stabilizers that have support on the qubit $b$. Summing over all spin configurations then enumerates all physical error representatives related by stabilizers. Upon imposing the Nishimori condition \cite{Nishimori_1981}, which fixes a relationship between the disorder probability and the temperature of the model, the partition function computes the cumulative probability of all physical errors within a given logical coset (consistent with the measured syndrome), as required for optimal maximum-likelihood decoding \cite{Chubb_2021}. Adopting the notation of Kovalev \emph{et al.}~\cite{PhysRevA.97.062320}, the resulting Hamiltonian takes the form
\begin{equation}\label{eq:GIM}
    \mathcal{H}=-\sum_{b=1}^{n}J\left[\prod_{j}S_{j}^{\theta_{jb}} \right].
\end{equation}
Here, $b$ is the index of the $n$ physical qubits, $j$ runs over all classical spins $S_{j}$, and $\theta_{jb}=1$ if the stabilizer $\hat{S}_{j}$ acts nontrivially on qubit $b$ and $\theta_{jb}=0$ otherwise. In the QEC setting, $\theta$ therefore coincides with the parity-check matrix. In some formulations of the statistical mechanical mapping, $\theta$ is instead identified with the generator matrix of the code \cite{PhysRevA.97.062320}. However, as we show below, the parity-check representation is more convenient for the analytic results developed here.

To simplify later analysis, we introduce the dimensionless coupling constant $K$, which is equal to the product of the inverse temperature $\beta$ and the coupling constant $J$. For partition functions to encode the cumulative probability of logical cosets, $K$ must be related to the bit-flip probability through the Nishimori conditions, which state that
\begin{equation}\label{eq:nishimori_conditions}
    K_{p}\defeq\beta J=\frac{1}{2}\log\left(\frac{1-p}{p}\right),
\end{equation}
where $p$ is the i.i.d.\ probability of bit-flip noise.

We next introduce electric and magnetic insertions to the system, which are defined through the partition function of such Hamiltonians (following Ref.~\cite{PhysRevA.97.062320})
\begin{align}\label{eq:generalized_partition}
    Z_{\vec{e},\vec{m}}(\theta;K_{p})=&\sum_{S_{i}=\pm 1}\prod_{b=1}^{n}\left[\prod_{j}S_{j}^{\theta_{jb}}\right]^{m_{b}}\times\nonumber\\
    &\exp\left(K_{p}(-1)^{e_{b}}\prod_{j}S_{j}^{\theta_{jb}}\right)
\end{align}
where $\vec{e}$ and $\vec{m}$ are binary vectors specifying electric and magnetic insertions, respectively. The index $b$ runs over all physical qubits (interaction terms in the Hamiltonian), and $j$ runs over all spins (stabilizers). When $e_{b}=1$, the term obtained from the physical qubit $b$ has antiferromagnetic coupling and when $e_{b}=0$, the coupling is ferromagnetic. In other words, if a bit-flip $\hat{\sigma}_{x}$ acts on qubit $b$ we set $e_{b}=1$ and if an identity $\hat{I}$ acts, $e_{b}=0$. When $m_{b}=1$, we bring down the term from the exponential, obtaining an expression proportional to a multi-spin correlation function. Setting $\vec{m}=0$ yields a partition function with electric insertions only, and this partition function $Z_{\vec{e},\vec{0}}(\theta;K_{p})$ is proportional to the cumulative probability of physical errors equivalent up to stabilizers within the logical sector encoded by $\vec{e}$. $Z_{\vec{0},\vec{0}}(\theta;K_{p})$ defines the clean ferromagnetic limit of the model with dimensionless coupling constant $K_{p}$.

We define the threshold as the physical error rate above which the free energy cost of every nontrivial logical coset vanishes in the thermodynamic limit, and below which it diverges \cite{Chubb_2021}. Equivalently, above threshold all logical sectors become asymptotically equally likely. With bit-flip noise, we only need to consider Paulis $\hat{E}\in\{\hat{I},\hat{\sigma}_{x}\}^{\otimes n}$. We denote $\vec{E}$ as the indicator vector of the support of $\hat{E}$, i.e. $E_{b}=1$ iff bond $b$ lies in $\mathrm{supp}(\hat{E})$. We now define the quenched average free energy cost of a nontrivial logical operator $\hat{\mathcal{L}}_{a}\in\mathcal{P}^{\otimes n}$ through
\begin{equation}
    \langle\Delta F_{\hat{\mathcal{L}}_{a}}\rangle_{\hat{E}}=\left\langle-\frac{J}{K_{p}}\log\left(\frac{Z_{\vec{E\mathcal{L}_{a}},\vec{0}}(\theta;K_{p})}{Z_{\vec{E},\vec{0}}(\theta;K_{p})}\right)\right\rangle_{\hat{E}},
\end{equation}
where $\langle.\rangle_{\hat{E}}$ denotes a quenched average over the Paulis $\hat{E}\in\mathcal{P}^{\otimes n}$. Finally, assuming the transition is unique, we define the threshold of the physical error rate $p_{c}$ through
\begin{equation}\label{eq:threshold}
    \lim_{n\to\infty}\langle\Delta F_{\vec{\mathcal{L}}_{a}}\rangle_{\hat{E}}=
\begin{cases}
    \infty & \text{if } p < p_{c} \\
    0  & \text{if } p > p_{c}.
\end{cases}
\end{equation}

\subsection{Generalized Kramers-Wannier duality}

We now review the generalized Kramers-Wannier duality for generalized Ising models of the form in Eq.~\eqref{eq:GIM}~\cite{10.1063/1.1665530}. The primary object on which this duality acts is the interaction matrix $\theta$, whose entries specify which Ising spins participate in which interaction terms. We consider the statistical mechanical model defined by the Hamiltonian specified by $\theta$ at some inverse temperature $\beta$. Applying the duality to this matrix produces a dual interaction matrix $\theta^{*}$, whose associated model exchanges electric and magnetic insertions at a dual inverse temperature $\beta^{*}$.

Given a generalized Ising model which possesses $N_{b}$ interaction terms (each associated with a physical qubit) and $N_{s}$ spins (each associated with a stabilizer generator), and is operating at inverse temperature $\beta$ such that $K=J\beta$, we can define a dual model by its interaction matrix $\theta^{*}$ if it satisfies the following three criteria:
\begin{enumerate}
    \item \textbf{Closure}
\begin{align}
    \theta\left(\theta^{*}\right)^{T}=0
\end{align}
where multiplication is taken over $\mathbb{F}_{2}$.

    \item \textbf{Completeness}
    \begin{equation}
    N_{\theta} + N_{\theta}^{*}=N_{b}
\end{equation}
where $N_{\theta}$ ($N_{\theta}^{*}$) is the rank of $\theta$ ($\theta^{*}$).
\item \textbf{Dual temperature}
\begin{equation}\label{eq:dual_temp}
    \tanh K=e^{-2K^{*}},
\end{equation}
or equivalently $\sinh(2K)\sinh(2K^{*})=1$, where $K^{*}$ is the coupling constant dual to $K$.
\end{enumerate}

The primary model has a partition function $Z_{\vec{0},\vec{0}}(\theta;K)$, and the dual model has $Z_{\vec{0},\vec{0}}(\theta^{*};K^{*})$. If these three criteria are satisfied, then $Z_{\vec{0},\vec{0}}(\theta;K)$ can be mapped to $Z_{\vec{0},\vec{0}}(\theta^{*};K^{*})$. To make this relationship clearer, we first define the symmetric partition functions, which remove temperature-dependent normalization factors. The symmetric partition functions are defined through \cite{10.1063/1.1665530}
\begin{equation}
    Y_{\vec{e},\vec{m}}(\theta;K)=Z_{\vec{e},\vec{m}}(\theta;K)2^{-(N_{s}+N_{g})/2}[\cosh(2K)]^{-N_{b}/2},
\end{equation}
where $N_{s}$ is the number of spins, $N_{b}$ is the number of interaction terms, and the ground state is $2^{N_{g}}$-fold degenerate (with $N_{g}=N_{s}-N_{\theta}$). $Y(\theta^{*};K^{*})$ is defined similarly. If the three criteria above hold, Wegner showed that these models obey
\begin{equation}\label{eq:Wegner_equality}
    Y_{\vec{0},\vec{0}}(\theta;K)=Y_{\vec{0},\vec{0}}(\theta^{*};K^{*}).
\end{equation}
More generally, electric and magnetic defects are exchanged under the duality. Inserting both types simultaneously produces a phase factor determined by their dot product taken over $\mathbb{F}_{2}$ \cite{PhysRevA.97.062320}. These symmetric partition functions are related through
\begin{equation}
    Y_{\vec{e},\vec{m}}(\theta;K)=(-1)^{\vec{e}\cdot\vec{m}}Y_{\vec{m},\vec{e}}(\theta^{*};K^{*}).
\end{equation}

We say that the model parametrized by $\theta$ is self-dual if the bulk free energy densities of the primary and dual models coincide in the thermodynamic limit within the trivial sector $(\vec e,\vec m)=(\vec0,\vec0)$. That is, the model is self-dual if
\begin{align}\label{eq:self_dual_1}
    f(K)&\defeq \lim_{n\to\infty}-\frac{1}{n}\log(Y_{\vec{0},\vec{0}}(\theta;K))\nonumber\\
    &= \lim_{n\to\infty}-\frac{1}{n}\log(Y_{\vec{0},\vec{0}}(\theta^{*};K))\eqdef f^{*}(K).
\end{align}
Equivalently, using Eq.~\eqref{eq:Wegner_equality}, this condition can be rewritten as
\begin{equation}\label{eq:self_dual_2}
    f(K)=f(K^{*}),
\end{equation}
where we have also used that generalized Kramers-Wannier duality is idempotent, that is, $K^{**}=K$ and $\theta^{**}=\theta$. Finally, if the critical point is unique, self-duality pins it to the self-dual temperature defined by $K^{*}=K=K_{c}$. Using Eq.~\eqref{eq:dual_temp}, this yields
\begin{equation}\label{eq:K_c}
    K_{c}=\frac{1}{2}\log(\sqrt2+1).
\end{equation}

\subsection{Replica analysis of Kramers-Wannier self-dual codes}\label{subsec:replica_analysis}

For self-dual generalized Ising models, the phase boundary can be approximated to leading order through the principal Boltzmann factor construction using the replica trick \cite{ohzeki2008duality,PhysRevE.77.061116,PhysRevE.79.021129}. We summarize the argument here in the notation of the generalized Ising model introduced above. First we assume that each physical qubit is acted upon by i.i.d.\ bit-flip noise, such that for all physical qubits indexed by $b$, we have
\[
\mathbb{P}(e_{b}=1)=p,\qquad \mathbb{P}(e_{b}=0)=1-p.
\]
For a fixed disorder realization $\vec{e}$, the resulting bond-disordered model is described by the generalized partition function $Z_{\vec e,\vec 0}(\theta;K_{p})$, see Eq.~\eqref{eq:generalized_partition}. Nontrivial logical sectors are then represented by shifted electric insertions $Z_{\vec{e\mathcal{L}}_{a},\vec 0}(K_{p})$, where $\vec{e\mathcal{L}}_{a}$ is the binary representative of the product of the Pauli $\hat{e}$ and the nontrivial logical operator $\hat{\mathcal{L}}_{a}$.

For $N_{\mathrm{rep}}$ replicas of the disordered model, the replicated Boltzmann weight associated with a single interaction term depends only on the number $l\in\{0,\dots,N_{\mathrm{rep}}\}$ of replicas in which the local interaction variable
\[
\prod_{j}(S_{j}^{\theta_{jb}})_{\alpha}
\]
is equal to $-1$. Here $\alpha$ indicates the replica index which runs from $1$ to $N_{\mathrm{rep}}$. We now fix $l$ replicas to have their local interaction variable equal to $-1$ and $N_{\mathrm{rep}}-l$ replicas having this variable equal to $+1$. Averaging over the disorder then gives the local Boltzmann factor
\begin{align}
x_{l}(K,p)&=\left\langle\exp\left(\sum_{\alpha=1}^{N_\mathrm{rep}}K(-1)^{e_{b}}\prod_{j}(S_{j}^{\theta_{jb}})_{\alpha}\right)\right\rangle_{e_{b}}
\nonumber\\
&=(1-p)e^{(N_{\mathrm{rep}}-2l)K}+pe^{-(N_{\mathrm{rep}}-2l)K},
\label{eq:xk}
\end{align}
where we have considered the local Boltzmann factor at some fixed interaction term $b$. Here, we do not assume that the Nishimori conditions hold. Instead, we treat $K$ and $p$ as independent variables. This lets us obtain the replica estimate for the entire phase boundary, rather than just the critical point along the Nishimori line (where $K=K_{p}$).

The corresponding dual local Boltzmann factors are obtained via the Fourier transform of Eq.~\eqref{eq:xk}, yielding
\begin{align}
x_l^{*}(K,p)&=2^{-N_{\mathrm{rep}}/2}\left[(1-p)+(-1)^l p\right]\times\nonumber\\
&(e^{K}+e^{-K})^{N_{\mathrm{rep}}-l}(e^{K}-e^{-K})^{l}.
\label{eq:xstark}
\end{align}

The principal Boltzmann factor approximation retains only the $l=0$ sector, corresponding to the case in which all replicated local interaction variables are equal to $+1$. One then imposes
\begin{equation}
x_{0}(K,p)=x_{0}^{*}(K,p),
\end{equation}
and finally takes the replica limit $N_{\mathrm{rep}}\to 0$. Since both sides equal $1$ at $N_{\mathrm{rep}}=0$, the leading-order condition is obtained by matching the derivatives of their logarithms at $N_{\mathrm{rep}}=0$:
\begin{align}
\left.\frac{\partial}{\partial N_{\mathrm{rep}}}\log x_{0}(K,p)\right|_{N_{\mathrm{rep}}=0}
&=K(1-2p),\\
\left.\frac{\partial}{\partial N_{\mathrm{rep}}}\log x_{0}^{*}(K,p)\right|_{N_{\mathrm{rep}}=0}
&=\log(2\cosh K)-\frac{1}{2}\log 2.
\end{align}
Equating these expressions gives the implicit curve
\begin{equation}\label{eq:phase_boundary}
F(p,K)\defeq K(1-2p)-\left[\log(2\cosh K)-\frac{1}{2}\log 2\right]=0.
\end{equation}
For self-dual models, this curve provides the leading-order replica prediction for the phase boundary~\cite{ohzeki2008duality,PhysRevE.77.061116,PhysRevE.79.021129}. Restricting Eq.~\eqref{eq:phase_boundary} to the Nishimori line, see Eq.~\eqref{eq:nishimori_conditions},  gives
\begin{equation}
-p\log p-(1-p)\log(1-p)=\frac{1}{2}\log 2,
\end{equation}
or equivalently
\begin{equation}
H_{2}(p)=\frac{1}{2},
\end{equation}
where $H_{2}$ is the binary entropy function. Thus, the principal Boltzmann factor construction predicts the intersection of the phase boundary with the Nishimori line, which we call the Nishimori point, at $p_{N}\approx 0.110028$~\cite{Takeda_2005}.

\section{Symmetry constraints and phase diagram equivalence}\label{sec:symmetry}

In this section, we show that generalized Kramers-Wannier self-duality imposes strong symmetry constraints on the statistical mechanical models associated with \emph{em}-symmetric CSS codes. We first reformulate Wegner's closure and completeness conditions in the language of chain complexes, making explicit how logical coset probabilities appear as partition functions with electric insertions. This naturally leads to a notion of Kramers-Wannier self-duality with logical sector mixing for finite-size codes, and we prove that \emph{em}-symmetric CSS codes satisfy this relation at finite size. We then show that, for code families with a unique threshold, the thermodynamic limit is self-dual if and only if the asymptotic code rate vanishes. As a consequence, the principal Boltzmann factor construction predicts a common zero-rate threshold and constrains the phase boundary above the Nishimori line for this broad class of codes. We further show that this self-duality is preserved under the self-concatenation map of suitable $[[n,1,d]]$ seed codes, which motivates the hierarchical decoding formulation developed later. Finally, we extend the same framework to mixed Pauli-erasure noise, yielding a corresponding three-dimensional phase diagram.

We first re-express the closure and completeness conditions of Wegner's duality in the language of a chain complex over $\mathbb{F}_{2}$. Such an expression of Wegner's duality has been formulated as a chain complex previously, such as in Ref.~\cite{10.1063/1.5039735}. However, our formulation differs in assigning only the parity-check matrices to the boundary maps (and not the generator matrices) with explicit inclusion of logical representatives. Here, we make manifest the encoding of logical coset probabilities as the partition functions with electric insertions as explained below.

Let $C_{1}\simeq\mathbb{F}_{2}^{N_{b}}$ represent the interaction terms (identified with physical qubits in the QEC mapping), and $C_{2}\simeq \mathbb{F}_{2}^{N_{s}}$ represent the Ising spins (equivalently, the $Z$-type stabilizers), so that the parity-check matrix is $H_{Z}=\theta$. Define boundary maps
\begin{equation}
    \partial_{2}\defeq \theta^{\top}\,:\,C_{2}\to C_{1},\qquad \partial_{1}\defeq\theta^{*}\,:\,C_{1}\to C_{0},\qquad
\end{equation}
with $C_{0}\simeq\mathbb{F}_{2}^{N_{s}^{*}}$ the dual-spin space (equivalently the $X$-type stabilizers). Wegner's closure condition $\theta(\theta^{*})^{\top}=0$ is equivalent to $\partial_{1}\partial_{2}=\theta^{*}\theta^{\top}=0$, hence $\mathrm{im}(\partial_{2})\subseteq \mathrm{ker}(\partial_{1})$. This closure is always satisfied by the chain complex condition. Wegner's completeness relation $N_{\theta}+N_{\theta}^{*}=N_{b}=\mathrm{dim}(C_{1})$, together with this inclusion, implies exactness at $C_{1}$, i.e., $\mathrm{ker}(\partial_{1})=\mathrm{im}(\partial_{2})$. This chain complex is indicated below:
\begin{center}
    \begin{tikzpicture}
  \matrix (M) [matrix of math nodes, column sep=1cm, row sep=1cm, nodes={anchor=center}] {
    C_{2} & C_{1} & C_{0}\\
  };
  \draw[->] (M-1-1) -- node[above] {$\partial_2=\theta^{\top}$} (M-1-2);
  \draw[->] (M-1-2) -- node[above] {$\partial_1=\theta^{*}$} (M-1-3);
\end{tikzpicture}
\end{center}
For CSS codes with $k>0$, identifying $\theta^{*}$ with $H_{X}$ violates the strict completeness relation by $k$ (the number of logical qubits). This failure is precisely the rank of the first homology group $H_{1}$. We can complete this map by including $k$ independent representatives of the $X$-type logical operators,
\begin{equation}
    \partial_{1}=\theta^{*}=\begin{bmatrix}
        H_{X}\\L_{X}
    \end{bmatrix},
\end{equation}
thereby satisfying all of Wegner's conditions. An independent corresponding chain complex can similarly be defined for the $Z$-type stabilizers and logical operators.

Given a Pauli stabilizer code which encodes $k>0$ logical qubits, for each measured syndrome, one obtains a vector of partition functions whose elements encode the different logical coset probabilities. Here, we restrict to bit-flip noise for clarity, but one can extend this mapping to phase-flip or depolarizing noise. Let the physical qubit (bond) space be $C_{1}\simeq \mathbb{F}_{2}^{N_{b}}$ and assume the code encodes $k>0$ logical qubits. Choose $k$ independent representatives $\{\hat{X}_a\}_{a=1}^{k}$ of the logical $X$ operators, and let $\vec{L}^{(a)}\in \mathbb{F}_{2}^{N_{b}}$ denote the indicator vector of the support of $\hat{X}_a$. Collect these representatives into the matrix,
\begin{equation}
    L_{X}\in\mathbb{F}_{2}^{k\times N_{b}},
    \qquad (L_X)_{a,b}\defeq L^{(a)}_{b}.
\end{equation}
For any logical label $\lambda\in\mathbb{F}_{2}^{k}$, define the corresponding electric insertion pattern,
\begin{equation}
    \vec{e}(\lambda)\defeq \lambda^{\top}L_{X}\in\mathbb{F}_{2}^{N_{b}},
\end{equation}
so that $\vec{e}(\lambda)$ specifies which interaction terms have their couplings flipped. For the clean model, the partition function in logical sector $\lambda$ is then
\begin{equation}\label{eq:Z_lambda_def}
    Z_{\lambda}(\theta;K_{p})\defeq Z_{\vec{e}(\lambda),\vec{0}}(\theta;K_{p}),
\end{equation}
where $Z_{\vec{e},\vec{m}}(\theta;K_{p})$ denotes the generalized partition function (see Eq.~\eqref{eq:generalized_partition}) with electric insertion $\vec{e}$ and magnetic insertion $\vec{m}$ (with $\vec{m}=\vec{0}$ here).

Once again, to remove temperature-dependent normalization factors, we define the corresponding symmetric partition functions by
\begin{equation}\label{eq:Y_lambda_def}
    Y_{\lambda}(\theta;K_{p})\defeq 2^{-(N_{s}+N_{g})/2}[\cosh(2K_{p})]^{-N_{b}/2}
    Z_{\lambda}(\theta;K_{p}),
\end{equation}
and collect them into the $2^{k}$-component vector
\begin{equation}\label{eq:Yvec_def}
    \vec{Y}(\theta;K_{p})\defeq\big(Y_{\lambda}(\theta;K_{p})\big)_{\lambda\in\mathbb{F}_{2}^{k}}
    \in \mathbb{R}^{2^{k}},
\end{equation}
where the components are ordered by a fixed convention (e.g., lexicographic
order on $\lambda$). For $k=1$, this reduces to $\vec Y(\theta;K_{p})=(Y_{0}(\theta;K_{p}),Y_{1}(\theta;K_{p}))^{\top}$. To make this chain complex construction clear, let us now consider a simple example.

\subsection{Example: Steane code}

We now construct the relevant chain complex for the Steane code with bit-flip noise in the code capacity setting. First, we write down the ($Z$-type) parity-check matrix, which is equivalent to that of the $[7,4]$ Hamming code, and we set this equal to the interaction matrix $\theta$:
\[
\theta= H_{Z}=\begin{bmatrix}
    1&1&0&1&1&0&0\\
    1&0&1&1&0&1&0\\
    0&1&1&1&0&0&1
\end{bmatrix}.
\]
Each row represents a stabilizer, and each column represents a physical qubit. Each stabilizer acts on four qubits, hence each row has four nonzero elements. This model has $N_{s}=3$ classical Ising spins (stabilizers), and $N_{b}=7$ interaction terms (physical qubits). To obtain the dual model, we must include a logical representative, which we can take to be a $\hat{\sigma}_{x}$ on every physical qubit, since the Steane code permits transversal logical Paulis. Then, incorporating this row, we obtain
\[
\theta^{*}=\begin{bmatrix}
    H_{X}\\
    L_{X}
\end{bmatrix}=\begin{bmatrix}
    1&1&0&1&1&0&0\\
    1&0&1&1&0&1&0\\
    0&1&1&1&0&0&1\\
    1&1&1&1&1&1&1
\end{bmatrix}.
\]

The model parametrized by $\theta$ has $N_{g}=N_{s}-N_{\theta}=0$, so each sector Hamiltonian has a unique ground state. The dual model parametrized by $\theta^{*}$ similarly has $N^{*}_{g}=0$.

\subsection{Phase diagram constraints}\label{subsec:phase_diagram_constraints}

Here we show how the phase diagrams of \emph{em}-symmetric codes are constrained by generalized Kramers-Wannier duality. Following the conventions above, we first define a finite-size notion of Kramers-Wannier self-duality of the clean model through the following.

\begin{definition}[Finite-size Kramers-Wannier self-dual code]
A CSS Pauli stabilizer code is said to be finite-size Kramers-Wannier self-dual if there exists a choice of primary and dual logical-sector bases such that
\begin{equation}
    \vec{Y}(\theta;K_{p}^{*})=\mathcal{H}_{k}\vec{Y}(\theta;K_{p}),
\end{equation}
for all $K_{p}\in[0,\infty)$, where $K_{p}^{*}$ is the dual coupling defined by
\begin{equation}
    e^{-2K_{p}^{*}}=\tanh K_{p},
\end{equation}
and $\mathcal{H}_{k}$ is the normalized Hadamard matrix
\begin{equation}
    (\mathcal{H}_{k})_{\vec{\mu},\vec{\lambda}}=2^{-k/2}(-1)^{\vec{\mu}\cdot\vec{\lambda}},
    \qquad \vec{\mu},\vec{\lambda}\in\mathbb{F}_{2}^{k}.
\end{equation}
Here, $\vec{Y}(\theta;K_{p})$ and $\vec{Y}(\theta;K_{p}^{*})$ denote the vectors of symmetric partition functions of the logical sectors of the primary model at the primary and dual temperatures, respectively, ordered according to some fixed convention.
\end{definition}

We demonstrate why this is a useful definition below, but first we provide motivation by showing that \emph{em}-symmetric CSS codes obey finite-size Kramers-Wannier self-duality. We call a CSS code \emph{em}-symmetric if the parity-check matrices $H_{X}$ and $H_{Z}$ are related through row and column permutations, that is, $H_{X}=P_{1}H_{Z}P_{2}$ for some permutation matrices $P_{1},P_{2}$. In the present work, we do not require translation invariance, locality, or an underlying lattice geometry. This encompasses a broad range of codes: from manifestly self-dual codes (where $H_X$ and $H_Z$ coincide without any relabeling, e.g., many 2D color code realizations \cite{Bombin_06}), to codes whose $X$- and $Z$-stabilizers are the same local patterns placed on different sublattices. A canonical example is the surface code, where vertex and plaquette-type stabilizers are translations of one another under a lattice symmetry (possibly composed with a dual-lattice identification), so that a spatial translation of qubits together with a reindexing of stabilizers implements the permutation equivalence above.

Observe that for the clean 2D Ising model on a torus (to which a fully postselected toric code maps, with $k=2$ logical qubits), the logical mixing matrix is equal to the normalized order-$4$ Hadamard matrix after a compatible choice of logical sector bases \cite{english2025_donut,PhysRevB.55.11045}. This model can be readily seen to be finite-size Kramers-Wannier self-dual. For codes with $k=1$ logical qubit, we obtain a mixing matrix proportional to the order-$2$ Hadamard matrix.

\begin{theorem}\label{thm:1}
Let $C$ be a CSS code on $n=N_{b}$ physical qubits with parity-check matrices $H_Z,H_X$, and assume that
\[
H_X=P_1H_ZP_2,
\]
where $P_{1}$ and $P_{2}$ are permutation matrices. Let
\[
\theta:=H_Z,\qquad \theta^*:=\begin{bmatrix} H_X \\ L_X \end{bmatrix},
\]
where the rows of $L_X$ are $k$ independent logical $X$ representatives. Then $\theta^*$ is a Wegner dual interaction matrix for $\theta$. Moreover, after a compatible choice of logical sector basis on the dual side,
\[
\vec Y(\theta;K_{p}^*)=\mathcal{H}_{k}\vec{Y}(\theta;K_{p}),\qquad (\mathcal H_k)_{\vec{\mu},\vec{\lambda}}=2^{-k/2}(-1)^{\vec{\mu}\cdot\vec{\lambda}}.
\]
\end{theorem}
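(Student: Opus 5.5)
The plan has three steps: check Wegner's conditions for $\theta^*$, apply the electric--magnetic exchange relation to obtain all logical sectors of the dual model, and then use the \emph{em} symmetry to identify the dual model with the primary one.

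\textbf{Step 1: Wegner's conditions.} Closure is $\theta(\theta^*)^\top = [H_Z H_X^\top \;|\; H_Z L_X^\top] = 0$. The first block vanishes by the CSS commutation condition. The second vanishes because logical $X$ representatives commute with every $Z$ stabilizer.

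For completeness, the rows of $L_X$ lie in $\ker H_Z$ but are independent modulo $\mathrm{rowspace}(H_X)$. Hence
\[
\rank\theta^* = \rank H_X + k .
\]
By the \emph{em} symmetry, $\rank H_X = \rank H_Z$, and for a CSS code $k = n - \rank H_X - \rank H_Z$. Together these give $N_\theta + N_\theta^* = n = N_b$.

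\textbf{Step 2: the dual partition functions.} Wegner's relation with insertions,
\[
Y_{\vec e,\vec m}(\theta;K)=(-1)^{\vec e\cdot\vec m}Y_{\vec m,\vec e}(\theta^*;K^*),
\]
gives, at $\vec m=0$,
\[
Y_\lambda(\theta;K_p)=Y_{\vec 0,\vec e(\lambda)}(\theta^*;K_p^*).
\]
This is a magnetic insertion along the logical $X$ support in the model built from $\theta^*$.

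In $\theta^*$, each logical row contributes one extra spin $\sigma_a$. That spin multiplies the bond variables on $\mathrm{supp}(L^{(a)})$. I will sum over these $k$ logical spins explicitly. Substituting $\sigma_a=\pm1$ converts the $\theta^*$ model into the $H_X$ model with electric insertion $\mu^\top L_X$, for each $\mu\in\mathbb F_2^k$. The magnetic insertion $\prod_{b\in \vec e(\lambda)}(\text{bond variable})$ then produces the factor $\prod_a \sigma_a^{\lambda_a}$, up to a residual product over $H_X$-spins.

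That residual is trivial. The reason is that $\vec e(\lambda)$ paired with the $H_X$ columns gives $H_X L_X^\top\lambda$. Here I choose the $L_X$ representatives so that $H_X L_X^\top = 0$, which is possible because $X$ logicals can be taken to have trivial overlap with $H_X$ rows only up to stabilizers. The cleaner route is to choose the $Z$-logical basis dual to $L_X$, so that the magnetic insertion can be deformed by an $H_Z$ stabilizer. I expect this bookkeeping to be the main obstacle. The target is
\[
Y_{\vec 0,\vec e(\lambda)}(\theta^*;K^*) \propto \sum_{\mu} (-1)^{\mu\cdot\lambda} Z_{\mu^\top L_X}(H_X;K^*) .
\]

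\textbf{Step 3: the \emph{em} symmetry and normalization.} Since $H_X=P_1H_ZP_2$, relabelling spins by $P_1$ and bonds by $P_2$ maps the $H_X$ model to the $H_Z$ model. Under this map the set of logical $X$ classes of the code is sent to a set of independent vectors in $\ker H_Z$ outside the row space. After fixing the dual-side sector basis compatibly, these are exactly the $Z$-model logical insertions. Hence
\[
Z_{\mu^\top L_X}(H_X;K^*)=Z_{\mu'}(\theta;K^*)
\]
under a relabelling $\mu\mapsto\mu'$, which is the freedom granted in the statement.

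It remains to check the normalization. The $\theta^*$ model has $N_s+k$ spins and $N_g^*=N_g$ degeneracy, whereas the $H_X$ model has $N_s$ spins. The $2^{-(N_s^*+N_g^*)/2}$ prefactor of the $\theta^*$ model therefore differs from that of the $H_Z$ model by exactly $2^{-k/2}$. Inverting the resulting Hadamard relation, using $\mathcal H_k^2=I$, then yields
\[
\vec Y(\theta;K_p^*)=\mathcal H_k\vec Y(\theta;K_p).
\]
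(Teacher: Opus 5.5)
Your Step 1 matches the paper's argument. Your Steps 2--3 mirror the paper's Step 2: you dualize $\theta$ at $K_p$ into $\theta^*$ at $K_p^*$ and then invert with $\mathcal H_k^2=I$, whereas the paper starts from $\theta^*$ at $K_p$ and lands on $\theta$ at $K_p^*$. That overall architecture is fine.

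The gap is the step you flag yourself. The kernel $(-1)^{\mu\cdot\lambda}$ is stated as a target, never derived, and it does not follow from your set-up. In the $\theta^*$ model the bond variable is $(-1)^{(H_X^\top s+L_X^\top t)_b}$, so a magnetic insertion $\vec m$ contributes $(-1)^{s\cdot H_X\vec m+t\cdot L_X\vec m}$. For your $\vec m=L_X^\top\lambda$ this is $(-1)^{s\cdot H_XL_X^\top\lambda+t\cdot L_XL_X^\top\lambda}$.
\begin{itemize}
\item The hypotheses give $H_ZL_X^\top=0$ but not $H_XL_X^\top=0$. For example, the straight dual-cut $X$ representatives of the toric code meet each vertex star along the cut in a single edge.
\item Deforming by $H_Z$ stabilizers cannot remove this residual. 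The map $\vec m\mapsto\vec m+H_Z^\top u$ changes neither $H_X\vec m$ nor $L_X\vec m$, because $H_XH_Z^\top=0=L_XH_Z^\top$.
\item Even when $H_XL_X^\top=0$ holds, the surviving sign is $(-1)^{t\cdot L_XL_X^\top\lambda}$. It is controlled by the $\mathbb F_2$ Gram matrix of your representatives, not by $t\cdot\lambda$. The Steane code, with $L_X=\vec 1$ of odd weight and even-weight checks, is the special case where your formula happens to hold.
\end{itemize}

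The missing idea is the paper's, and your ``cleaner route'' names it, though with the wrong mechanism. Place the insertion on $Z$-type logical representatives dual to $L_X$: a matrix $L_Z$ with $H_XL_Z^\top=0$ and $L_XL_Z^\top=I_k$. It exists because the pairing between $X$- and $Z$-logical classes is nondegenerate. For $\vec m=L_Z^\top\lambda$ the $H_X$ spins drop out identically and the sign is exactly $(-1)^{t\cdot\lambda}$. This gives $Z_{\vec 0,L_Z^\top\lambda}(\theta^*;K)=\sum_{t}(-1)^{t\cdot\lambda}Z_{L_X^\top t,\vec 0}(H_X;K)$.

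In your direction, the $\lambda$-labelled sectors of $\theta=H_Z$ must therefore be the insertions $L_Z^\top\lambda\in\ker H_X$. These are the unfrustrated sectors of the $H_Z$ model; $L_X^\top\lambda$ generally is not one. This is what the ``compatible choice of logical sector basis'' is spent on.

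Step 3 must then match the $X$-logical cosets of the $H_X$ model with these $Z$-logical cosets, not with vectors in $\ker H_Z$ as you wrote. Relabelling gives $Z_{\vec v,\vec 0}(H_X;K)=Z_{P_2\vec v,\vec 0}(H_Z;K)$ and $P_2\,\mathrm{rowspace}(H_X)=\mathrm{rowspace}(H_Z)$, with row spaces viewed in $\mathbb F_2^{n}$. So relabelling sends $\ker H_X$ onto $\ker H_Z$. It sends $\ker H_Z$ onto $\ker H_X$, and hence $X$-logical cosets onto $Z$-logical cosets, if and only if also $P_2\,\mathrm{rowspace}(H_Z)=\mathrm{rowspace}(H_X)$. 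That holds for the toric-code half-translation, whose square is a lattice translation, but it is an extra property of the permutation. The paper also uses it without stating it, so you should state it explicitly.
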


\begin{proof}
We break the proof into two steps.

\medskip
\noindent\textbf{1. $\theta^*$ satisfies Wegner closure and completeness.}

By construction $\theta=H_Z$. Since $C$ is CSS, the stabilizers of the code are split into $X$- and $Z$-type, and can be expressed through their parity-check matrices $H_{X}$ and $H_{Z}$. Because the stabilizers of QEC codes must commute with one another,
\[
H_ZH_X^{\top}=0 \qquad (\text{over } \mathbb F_2).
\]
Also, every logical $X$ commutes with every $Z$-type stabilizer, hence
\[
H_ZL_X^{\top}=0.
\]
Therefore
\[
\theta(\theta^*)^{\top}
=
H_Z
\begin{bmatrix}
H_X^{\top} & L_X^{\top}
\end{bmatrix}
=
\begin{bmatrix}
H_ZH_X^{\top} & H_ZL_X^{\top}
\end{bmatrix}
=0,
\]
so the closure condition holds.

For completeness, let $r_Z=\rank(H_Z)$ and $r_X=\rank(H_X)$. Since
$H_X=P_1H_ZP_2$ and permutations preserve rank,
\[
r_X=r_Z.
\]
The rows of $L_X$ are chosen to be independent modulo the row space of $H_X$,
so
\[
\rank(\theta^*)=\rank\!\begin{bmatrix}H_X\\L_X\end{bmatrix}=r_X+k.
\]
Using the CSS dimension formula
\[
k=n-r_X-r_Z,
\]
we obtain
\begin{align}
    N_{\theta}+N_{\theta}^{*}&=r_Z+(r_X+k)\nonumber\\
    &=r_Z+r_X+n-r_X-r_Z\nonumber\\
    &=n=N_b.\nonumber
\end{align}
Hence, $\theta^*$ satisfies Wegner completeness, and therefore is a
dual interaction matrix for $\theta$.

\medskip
\noindent\textbf{2. The dual temperature mixes logical sectors.}

When we compute the partition function of the dual model, fixing the logical spins into a configuration consistent with a logical sector $\mu\in\mathbb{F}_{2}^{k}$ sets the remaining part of the partition function into the primary model in the respective logical sector. Then, summing over all logical spin configurations sums over all logical sector partition functions. When we insert a magnetic insertion in a given logical sector $\lambda\in\mathbb{F}_{2}^{k}$ chosen in the basis of $Z$-type logical operators dual to the $X$-type logical representatives included in $L_{X}$, the stabilizer spins cancel (because logicals must commute with stabilizers) and we are left with the logical spin brought down from the exponential as in Eq.~\eqref{eq:generalized_partition}. In doing so, we pick up a minus sign if this logical degree of freedom has odd pairing with the electric sector choice:
\[
Z_{\vec{0},\vec{m}(\lambda)}(\theta^{*};K_{p})=\sum_{\mu\in\mathbb{F}_{2}^{k}}(-1)^{\vec{\mu}\cdot \vec{\lambda}}Z_{\vec{e}(\mu),\vec{0}}(\theta;K_{p}).
\]
We can rewrite the left hand side using the standard electric-magnetic duality~\cite{PhysRevA.97.062320},
\[
Z_{\vec{0},\vec{m}(\lambda)}(\theta^{*};K_{p})=\frac{Z_{\vec{m}(\lambda),\vec{0}}(\theta;K_{p}^{*})}{A(K_{p})},
\]
where $A(K_{p})$ is the temperature-dependent normalization factor. Combining the two gives us
\[
\frac{Z_{\vec{m}(\lambda),\vec{0}}(\theta;K_{p}^{*})}{A(K_{p})}=\sum_{\mu\in\mathbb{F}_{2}^{k}}(-1)^{\vec{\mu}\cdot \vec{\lambda}}Z_{\vec{e}(\mu),\vec{0}}(\theta;K_{p}).
\]
Finally, moving to the symmetric partition functions, we absorb the $A(K_{p})$ factor and pick up a factor of $2^{-k/2}$ as the number of spins differs between the primary and dual model by precisely $k$, see Eq.~\eqref{eq:Y_lambda_def}. Thus, we arrive at
\[
\vec{Y}(\theta;K_{p}^{*})=\mathcal{H}_{k}\vec{Y}(\theta;K_{p}).
\]
\end{proof}

With \emph{em}-symmetric CSS codes obeying finite-size Kramers-Wannier self-duality, under the definition of threshold as in Eq.~\eqref{eq:threshold}, we now show that in the thermodynamic limit, \emph{em}-symmetric CSS codes which possess a unique threshold are Kramers-Wannier self-dual if and only if the (asymptotic) code rate is zero.

\begin{corollary}\label{thm:2}
Let $\{C_n\}_{n\in\mathbb{N}}$ be an $em$-symmetric CSS family with asymptotic rate
\[
R:=\lim_{n\to\infty}\frac{k_n}{n},
\]
and suppose the family has a unique threshold in the sense that below threshold the identity sector asymptotically dominates all nontrivial sectors, while above threshold all logical sectors become asymptotically equiprobable. Then,
\begin{align}
    &\lim_{n\to\infty}\frac{1}{n}\log Y_{\vec{0}}(\theta;K_{p}^*)\nonumber\\
    &=\begin{cases}
\displaystyle
\lim_{n\to\infty}\frac{1}{n}\log Y_{\vec{0}}(\theta;K_{p})-\frac{R}{2}\log 2,
& p<p_c,\\[1.2em]
\displaystyle
\lim_{n\to\infty}\frac{1}{n}\log Y_{\vec{0}}(\theta;K_{p})+\frac{R}{2}\log 2,
& p>p_c.
\end{cases}\nonumber
\end{align}
Consequently, the family is Kramers-Wannier self-dual in the thermodynamic limit if and only if $R=0$.
\end{corollary}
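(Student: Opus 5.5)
The plan is to read off the identity component of the finite-size relation of Theorem~\ref{thm:1}, and then use the unique-threshold hypothesis to evaluate the resulting sum over logical sectors in each phase. Taking the row $\vec\mu=\vec0$ of $\vec Y(\theta;K_p^*)=\mathcal H_k\vec Y(\theta;K_p)$, every sign $(-1)^{\vec 0\cdot\vec\lambda}$ equals $+1$, so
\begin{equation*}
Y_{\vec 0}(\theta;K_p^*)=2^{-k_n/2}\sum_{\lambda\in\mathbb F_2^{k_n}}Y_\lambda(\theta;K_p)=2^{-k_n/2}\,Y_{\vec0}(\theta;K_p)\sum_{\lambda}\frac{Y_\lambda(\theta;K_p)}{Y_{\vec 0}(\theta;K_p)} .
\end{equation*}
Taking $\frac1n\log$ gives an exact finite-size identity. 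The prefactor contributes $-\frac{k_n}{2n}\log 2\to-\frac R2\log2$. It remains to control $\frac1n\log\Sigma_n$, where $\Sigma_n\defeq\sum_\lambda Y_\lambda/Y_{\vec0}$.

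In the ordered phase $p<p_c$, the hypothesis says the identity sector asymptotically dominates. I interpret this as $Y_\lambda/Y_{\vec 0}\to0$ for all $\lambda\neq0$, with enough uniformity that $\Sigma_n$ is subexponential, i.e.\ $\frac1n\log\Sigma_n\to0$. As a sanity bound, $1\le\Sigma_n\le 2^{k_n}\max_\lambda Y_\lambda/Y_{\vec0}$, and the maximum ratio is at most $1$ in the clean ferromagnetic model. This yields the first case, with shift $-\frac R2\log2$.

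In the disordered phase $p>p_c$, all sectors become equiprobable, so $Y_\lambda/Y_{\vec0}\to1$. Then $\Sigma_n=2^{k_n}e^{o(n)}$, hence $\frac1n\log\Sigma_n\to R\log2$, and the total shift is $-\frac R2\log2+R\log2=+\frac R2\log2$. This is the second case.

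The main obstacle is making the two asymptotic statements precise when $k_n$ grows linearly. Pointwise convergence of each ratio does not by itself control a sum of exponentially many terms. So I will state the unique-threshold hypothesis in the uniform, exponential-rate form, namely $\frac1n\log\Sigma_n\to0$ or $\to R\log 2$ respectively, which is what ``asymptotically dominates/equiprobable'' must mean for rate-$R$ families. I will note that when $R=0$ only the trivial bound $\frac{k_n}{n}\log2\to0$ is needed, so both cases hold unconditionally.

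For the final equivalence, suppose $R=0$. Both cases reduce to $f(K_p^*)=f(K_p)$, which is exactly the self-duality condition Eq.~\eqref{eq:self_dual_2}. Conversely, suppose $R>0$. Then on either side of $p_c$ the free energies at $K_p$ and $K_p^*$ differ by the nonzero constant $\pm\frac R2\log 2$, so Eq.~\eqref{eq:self_dual_2} fails. Hence the family is self-dual in the thermodynamic limit if and only if $R=0$.
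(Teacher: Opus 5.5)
Your proposal is correct and follows the paper's proof: you take the $\vec\mu=\vec 0$ row of $\vec Y(\theta;K_p^*)=\mathcal H_k\vec Y(\theta;K_p)$, split off the $2^{-k_n/2}$ prefactor, evaluate $\frac1n\log\sum_\lambda Y_\lambda(\theta;K_p)$ in each phase from the threshold hypothesis, and read off self-duality if and only if $R=0$. Your additions sharpen the argument rather than change the route. You make explicit the exponential-rate form of the hypothesis that the paper uses tacitly when $k_n$ grows linearly; note that $\Sigma_n=1/\mathbb{P}(\text{identity sector})$, so ``identity dominates'' in the sense of vanishing block failure already gives $\Sigma_n\to 1$. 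Your sandwich $1\le\Sigma_n\le 2^{k_n}$, which follows from $Y_\lambda\le Y_{\vec 0}$ at $K_p\ge 0$, shows that the $R=0$ direction holds without any threshold assumption at all.
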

\begin{proof}
From Theorem~\ref{thm:1}, after ordering the logical sectors such that the identity sector corresponds to the first component, the first row of the Hadamard matrix gives
\[
Y_{\vec{0}}(\theta;K_{p}^*)=2^{-k/2}\sum_{\lambda\in\mathbb F_2^k} Y_\lambda(\theta;K_{p}).
\]
Thus,
\begin{align}
    &\lim_{n\to\infty}\frac{1}{n}\log Y_{\vec{0}}(\theta;K_{p}^{*})\nonumber\\
    &=\lim_{n\to\infty}\frac{1}{n}\log\left(2^{-k/2}\sum_{\lambda\in\mathbb F_2^k} Y_\lambda(\theta;K_{p})\right)\nonumber\\
    &= -\frac{R}{2}\log2+\lim_{n\to\infty}\frac{1}{n}\log\left(\sum_{\lambda\in\mathbb F_2^k} Y_\lambda(\theta;K_{p})\right)\nonumber
\end{align}

Now, we use the threshold assumption. By the definition of a threshold in Eq.~\eqref{eq:threshold} (which must hold for the clean model as well since the free energy cost of nontrivial logical operators of the disordered model is upper bounded by the clean model; see Section~III of Ref.~\cite{PhysRevA.97.062320}), the identity sector partition function dominates asymptotically below threshold ($p<p_{c})$, and we obtain
\[
\lim_{n\to\infty}\frac{1}{n}\log Y_{\vec{0}}(\theta;K_{p}^{*})=-\frac{R}{2}\log2+\lim_{n\to\infty}\frac{1}{n}\log Y_{\vec{0}}(\theta;K_{p}).
\]

Above threshold $(p>p_{c})$, all logical sector partition functions become asymptotically equal, and we obtain
\begin{align}
      \lim_{n\to\infty}\frac{1}{n}\log Y_{\vec{0}}(\theta;K_{p}^{*})&=-\frac{R}{2}\log2+\lim_{n\to\infty}\frac{1}{n}\log\left(2^{k}Y_{\vec{0}}(\theta;K_{p})\nonumber\right)\\
      &= \frac{R}{2}\log 2+\lim_{n\to\infty}\frac{1}{n}\log Y_{\vec{0}}(\theta;K_{p}).\nonumber
\end{align}
This proves the two asymptotic relations.

Finally, an \emph{em}-symmetric family is Kramers-Wannier self-dual in the thermodynamic limit precisely when the bulk free energies of the primary and dual models coincide, see Eqs.~\eqref{eq:self_dual_1} and \eqref{eq:self_dual_2}. The only possible extensive discrepancy is the sector mixing term computed above. If $R=0$, then $\frac{k}{n}\to 0$ and this defect vanishes in both phases, so the family is self-dual in the thermodynamic limit. Conversely, if $R>0$, then in the disordered phase the bulk free energies differ by $\frac{R}{2}\log 2$, so the family cannot be thermodynamically self-dual. Hence, thermodynamic Kramers-Wannier self-duality holds if and only if $R=0$.
\end{proof}

With this self-duality relation established for zero-rate \emph{em}-symmetric CSS codes, we now show that the principal Boltzmann factor condition proposed in Refs.~\cite{PhysRevE.80.011141,PhysRevX.2.021004} under a replica trick framework constrains the threshold to that given by the hashing bound with zero rate~\cite{PhysRevA.54.1098,PhysRevA.54.3824}.

\begin{result}
    The threshold obtained by enforcing the principal Boltzmann factor for any $em$-symmetric CSS Pauli stabilizer code (with zero rate) is determined by the hashing bound with zero rate. This threshold is given by the physical error rate $p_{c}$ which satisfies $H_{2}(p_{c})=\frac{1}{2}$, where $H_{2}$ is the binary entropy function. This value is approximately given by $p_{c}^{x}=p_{c}^{z}\approx 0.110028$.
\end{result}

Counterexamples to this zero-rate phenomenology include surface codes on closed hyperbolic manifolds. In this setting, the logical sector structure remains extensive in the thermodynamic limit, and the associated random-bond Ising model is not self-dual even on self-dual lattices~\cite{PhysRevE.107.024125}. Correspondingly, the maximum-likelihood threshold need not be pinned near the zero-rate self-dual value: for the \{5,5\} hyperbolic surface code family, Ref.~\cite{PhysRevE.107.024125} reports \(p_{\mathrm{th}}^{\mathrm{MLD}} \approx 0.0228\). This is consistent with Corollary~\ref{thm:2}, which identifies asymptotically vanishing rate as the condition under which the sector-mixing defect becomes subextensive and thermodynamic self-duality is recovered. 

Above the Nishimori line, the phase boundary $T_{c}(p)$ for $p<p_{N}$ is constrained in the replica limit $N_{\mathrm{rep}}\to 0$ with the principal Boltzmann factor enforced for any \emph{em}-symmetric CSS Pauli stabilizer code with zero rate, where $p_{N}$ is the multicritical Nishimori point. In this limit, the phase boundary is given by Eq.~\eqref{eq:phase_boundary}. We emphasize that this condition should be interpreted as an approximation to the phase boundary only above the Nishimori line. For the random-bond Ising model, the Nishimori point is known to be a multicritical and unstable fixed point of the renormalization group (RG). This means that RG flows originating above the Nishimori line are attracted to the clean Ising fixed point, whereas flows below the line run towards the zero-temperature (spin-glass) fixed point \cite{Picco_2006}. Since the present construction becomes exact at the clean (self-dual) fixed point, it is natural to expect that the resulting branch provides a reasonable approximation to the phase boundary in the region controlled by this fixed point. By contrast, no such control exists in the regime below the Nishimori line, where this approximation is not expected to hold. In particular, while the replica trick predicts an approximate zero-temperature critical point at $p_{c}(T=0)=0$, this is empirically found to be far from the true zero-temperature critical point \cite{PhysRevE.84.040101}.

Concatenation preserves the \emph{em}-symmetry of a CSS seed code. Indeed, if the seed code satisfies $H_X=P_1H_ZP_2$ for some row and column permutation matrices $P_1,P_2$, then replacing each physical qubit by an encoded copy of the same seed code preserves this permutation equivalence at every concatenation level. Moreover, for a $[[n,1,d]]$ seed code, concatenation gives a $[[n^r,1,d^r]]$ code, so the number of logical qubits remains $k=1$. If the seed admits a transversal logical Pauli representative, this representative is inherited by the concatenated family.

\begin{lemma}
Let $\mathcal{C}$ be an \emph{em}-symmetric $[[n,1,d]]$ CSS seed code. There exists a concatenation map such that every finite concatenation level remains finite-size Kramers-Wannier self-dual.
\end{lemma}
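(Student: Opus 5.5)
The plan is to reduce the lemma to Theorem~\ref{thm:1}. It suffices to construct a concatenation map under which each level $r$ code $\mathcal{C}^{(r)}$ is again an \emph{em}-symmetric CSS code with $k=1$. Theorem~\ref{thm:1} then gives finite-size Kramers-Wannier self-duality at every level directly, with $\mathcal{H}_1$ the normalized $2\times 2$ Hadamard matrix.

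\textbf{Construction.} Let the seed have $H_X=P_1H_ZP_2$. Define $\mathcal{C}^{(r)}$ recursively by replacing each of the $n$ qubits of the outer seed code with a block encoding $\mathcal{C}^{(r-1)}$. The stabilizers of the concatenated code are:
\begin{itemize}
\item the inner-block stabilizers of $\mathcal{C}^{(r-1)}$ on each block, given by block-diagonal parity checks $I_n\otimes H_{Z}^{(r-1)}$ and $I_n\otimes H_X^{(r-1)}$;
\item the outer seed stabilizers lifted to logical operators of the blocks.
\end{itemize}
The lifting is the crucial choice. We fix one logical $X$ representative $\ell_X^{(r-1)}$ and one logical $Z$ representative $\ell_Z^{(r-1)}$ of $\mathcal{C}^{(r-1)}$, related by the column permutation $Q^{(r-1)}$ that relates $H_X^{(r-1)}$ and $H_Z^{(r-1)}$, i.e.\ $\ell_X^{(r-1)}=\ell_Z^{(r-1)}Q^{(r-1)}$. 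The outer checks are then $H_Z\otimes\ell_Z^{(r-1)}$ and $H_X\otimes \ell_X^{(r-1)}$.

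This is the step where the freedom in the "concatenation map" is used. A transversal logical, such as all-ones for Steane, makes the choice automatic. In general, we must show such a compatible pair exists. We will try to prove this inductively. At the seed level, the column permutation $P_2$ maps the $Z$-logical coset to the $X$-logical coset. This holds because $P_2$ maps $\ker H_X$ to $\ker H_Z$ up to the row relabeling, and it maps row spaces likewise. Hence $P_2$ maps $\ker H_X\setminus \mathrm{row}(H_Z)$ into $\ker H_Z\setminus\mathrm{row}(H_X)$. We therefore pick any $\ell_Z$ and set $\ell_X=\ell_Z P_2$ (or its inverse orientation). Note that $P_2$ sends $\mathrm{row}(H_Z)$ to $\mathrm{row}(H_X)$ only after accounting for the fact that $H_X=P_1H_ZP_2$ relates the two row spaces by column permutation. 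The statement $\ker H_X = (\ker H_Z)P_2$-type then follows from $H_ZH_X^\top=0$ being permutation-covariant. This bookkeeping is where I expect the main obstacle to lie.

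\textbf{Verification.} Set $Q^{(r)}=Q\otimes Q^{(r-1)}$ for the outer column permutation $Q=P_2$ acting on blocks, and let $P_1^{(r)}$ be the corresponding row permutation. We check that
\[
\begin{bmatrix} I_n\otimes H_X^{(r-1)}\\ H_X\otimes\ell_X^{(r-1)}\end{bmatrix}
\]
equals a row permutation of
\[
\begin{bmatrix} I_n\otimes H_Z^{(r-1)}\\ H_Z\otimes\ell_Z^{(r-1)}\end{bmatrix}(P_2\otimes Q^{(r-1)}).
\]
For the inner rows this holds because $I_n P_2=P_2 I_n$ up to a block row permutation. For the outer rows it holds because $\ell_Z^{(r-1)}Q^{(r-1)}=\ell_X^{(r-1)}$ and $H_ZP_2=P_1^{-1}H_X$.

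Finally, we check the remaining hypotheses of Theorem~\ref{thm:1}:
\begin{itemize}
\item The CSS commutation $H_Z^{(r)}(H_X^{(r)})^\top=0$ follows from the commutation of the inner and outer codes and from $\ell_Z^{(r-1)}\cdot\ell_X^{(r-1)}=1$ pairing outer checks as $H_ZH_X^\top=0$.
\item $k=1$ is preserved, giving $[[n^r,1,d^r]]$.
\item The new compatible logical pair $(\ell_Z\otimes\ell_Z^{(r-1)},\ \ell_X\otimes\ell_X^{(r-1)})$ is related by $Q^{(r)}$, which closes the induction.
\end{itemize}
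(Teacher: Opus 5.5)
Your route is the paper's: reduce to Theorem~\ref{thm:1} by showing that a suitable concatenation keeps the code \emph{em}-symmetric with $k=1$, lifting the outer checks with matched logical representatives. Given a pair with $\ell_X=\ell_Z Q$, your Kronecker-product verification and the induction step are correct. They are also more precise than the paper's sketch, since equal-weight representatives alone do not make the lifted checks permutation-equivalent.

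The gap is the base case you flagged, and the argument you give for it is wrong. The exclusion part is fine: since $\mathrm{row}(H_Z)P_2=\mathrm{row}(H_X)$, the map $v\mapsto vP_2$ keeps vectors outside $\mathrm{row}(H_Z)$ outside $\mathrm{row}(H_X)$. Membership fails, however. From $H_X=P_1H_ZP_2$ one gets $\ker H_X=(\ker H_Z)P_2$, so $(\ker H_X)P_2=(\ker H_Z)P_2^2$. This equals $\ker H_Z$ only if $P_2^2$ preserves $\mathrm{row}(H_Z)$, i.e.\ only if $\mathrm{row}(H_X)P_2=\mathrm{row}(H_Z)$. In words, $P_2$ must \emph{exchange} the two stabilizer spaces, not merely carry one onto the other.

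For $k=1$ this exchange property is also necessary for even a single compatible pair. The spaces $\ker H_X$ and $(\ker H_Z)P_2^{-1}$ both have dimension $r+1$ and both contain $\mathrm{row}(H_Z)$ as a hyperplane. A compatible $\ell_Z$ must lie in both but not in $\mathrm{row}(H_Z)$, which forces the two spaces to coincide.

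The exchange property does not follow from $H_X=P_1H_ZP_2$. Take $H_Z$ with rows $11100$ and $00111$, and $H_X$ with rows $10101$ and $01110$; this is a $[[5,1,2]]$ code. The column permutation $2\mapsto 5\mapsto 4\mapsto 2$ gives $H_X=H_ZP_2$. Yet $10101\,P_2=10110\notin\mathrm{row}(H_Z)$. Indeed, none of the four $Z$-logicals $10001$, $01101$, $10110$, $01010$ is sent to an $X$-logical, so "pick any $\ell_Z$ and set $\ell_X=\ell_ZP_2$" fails and your induction has no base case. In this example the transposition $(2\,5)$ does exchange the two spaces, so the issue is the choice of permutation.

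To close the proof you must either show that some admissible $Q$ (one with $H_X=P_1'H_ZQ$) has the exchange property, or add it as a hypothesis. A clean sufficient condition, and the setting the paper actually uses for concatenation, is that $X^{\otimes n}$ and $Z^{\otimes n}$ are logical operators. Then $\ell_X=\ell_Z=\mathbf{1}$ satisfies $\ell_ZQ=\ell_X$ for every admissible $Q$. Since $\mathbf{1}_n\otimes\mathbf{1}_{n^{r-1}}=\mathbf{1}_{n^r}$, the property is inherited at every level, and your induction closes.
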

\begin{proof}
    If the seed code is \emph{em}-symmetric, then there must exist equal-weight representatives of $X$- and $Z$-type logical operators. Then, by choosing a concatenation map which acts on these equal-weight logical representatives, each $X$-type stabilizer will have a corresponding equal-weight $Z$-type stabilizer. Moreover, we will retain $k=1$ logical qubit. Thus at each level, the code remains \emph{em}-symmetric. Therefore, we can take our dual model as in Theorem~\ref{thm:1}. Hence our code remains finite-size Kramers-Wannier self-dual at any concatenation level.
\end{proof}

Because concatenation of such codes preserves \emph{em}-symmetry, a unique critical point is again pinned to the self-dual $K_{c}=\frac{1}{2}\log(\sqrt{2}+1)$ and the arguments above hold. The principal Boltzmann factor construction does not require translational invariance of the incidence structure $\theta$ as with topological codes with a regular stabilizer structure. It only requires that the disorder distribution factorizes over interaction terms and that each interaction contributes a two-valued ($\pm 1$) local weight. Even though $\theta$ is potentially highly inhomogeneous in concatenated codes, the $l=0$ replica sector contributes $x_{0}^{N_b}$ irrespective of the detailed form of $\theta$, so the leading order principal-factor condition remains $x_{0}(p,K)=x_{0}^{*}(p,K)$ for uniform i.i.d. disorder.

\subsection{Mixed Pauli and erasure channels}

The statistical mapping can be extended to incorporate the effects of qubit erasures straightforwardly. The modifications to Pauli error thresholds in the presence of loss were first investigated in Ref.~\cite{PhysRevLett.102.200501}. Recent work has re-emphasized the potential benefits associated with erasure qubits \cite{PhysRevResearch.7.013249,985g-58gd}, and this builds on a mature literature focused on photonic fault-tolerant quantum computing, where erasures are the dominant source of errors \cite{PhysRevA.52.3489,Knill2001,PhysRevA.68.064303,Bartolucci2023,PhysRevA.72.032307}.

In the statistical mechanical model obtained via the mapping, the quenched disorder represents physical errors, i.e., through the identification of bit-flips with an antiferromagnetic coupling with probability $p$, and ferromagnetic coupling with a complementary probability $1-p$. When a qubit is erased, the erasure location is known, and no quantum information remains accessible at that location. In the statistical mechanical mapping, this is represented by setting the corresponding coupling to zero \cite{d8rx-srpn}. We represent the probability of such a 0 coupling constant by the variable $q$ (the bond dilution; see Ref.~\cite{Picco_2006}), and therefore obtain a 3-dimensional phase diagram in terms of $T,p,q$, where $T\in[0,\infty)$, $p,q\in[0,1]$.

To obtain the replica trick prediction of the phase boundary, we modify the partition function in Eq.~\eqref{eq:generalized_partition} to incorporate this bond dilution represented through
\begin{align}
        Z_{\vec{e},\vec{m}}(\theta;K_{p})=&\sum_{S_{i}=\pm 1}\prod_{b=1}^{n}\left[\prod_{j}S_{j}^{\theta_{jb}}\right]^{m_{b}}\times\nonumber\\
    &\exp\left(K_{p}\eta_{b}(-1)^{e_{b}}\prod_{j}S_{j}^{\theta_{jb}}\right),
\end{align}
where we now impose $\mathbb{P}(\eta_{b}=0)=q$ and $\mathbb{P}(\eta_{b}=1)=1-q$. With this new bond distribution, we obtain the local averaged Boltzmann factor through (see Eq.~\eqref{eq:xk})
\begin{equation}
    x_{l}(q,p,K)=(1-q)\left[(1-p)e^{(N_{\mathrm{rep}}-2l)K}+p e^{-(N_{\mathrm{rep}}-2l)K}\right]+q
\end{equation}
and by the linearity of the Fourier transform, we obtain the dual Boltzmann factor through
\begin{align}
    x_{l}^{*}(q,p,K)=(1-q)x^{*}_{l}(p,K) + q x_{l}^{*}(\mathrm{const}\ 1),
\end{align}
where $\mathrm{const}\ 1$ indicates that the local Boltzmann factor does not depend on the replica spins. As it is the Fourier transform of $x_{l}=1$ for all $l$, we obtain a Kronecker delta at the $l=0$ mode:
\begin{equation}
    x_{l}^{*}(\mathrm{const}\ 1)=2^{N_{\mathrm{rep}}/2}\delta_{l,0}
\end{equation}
and hence the dual Boltzmann factor is obtained through
\begin{align}
    x_{l}^{*}(q,p,K)=&(1-q)\big[2^{-N_{\mathrm{rep}}/2}((-1)^{l}p + (1-p))\times\nonumber\\    
    &(e^{K}+e^{-K})^{N_{\mathrm{rep}}-l}(e^{K}-e^{-K})^{l} \big] + q 2^{N_{\mathrm{rep}}/2}\delta_{l,0}.
\end{align}

By equating the primary and dual local Boltzmann factors, enforcing the principal Boltzmann condition ($l=0$) and taking the replica limit $N_{\mathrm{rep}}\to 0$, and by retaining the leading term in $N_{\mathrm{rep}}$, we obtain the implicit curve for the entire phase boundary
\begin{align}
    (1-2p)(1-q)K=&(1-q)\left[\log(2\cosh(K))-\frac{1}{2}\log(2)\right]\nonumber\\
    &+ q\frac{1}{2}\log(2).
\end{align}

If we set $q=0$, we recover the phase boundary predicted in Eq.~\eqref{eq:phase_boundary}. If we set $p=0$, then in taking the limit $K\to\infty$, i.e., at zero temperature, we recover $q_{th}=1/2$, consistent with the value set by the no-cloning theorem \cite{PhysRevLett.78.3217}. We plot in Fig.~\ref{fig:3d_phase_diagram} the predicted phase diagram incorporating bit-flips and qubit erasures using the replica trick.

\begin{figure}[t]
    \centering
    \includegraphics[width=\linewidth]{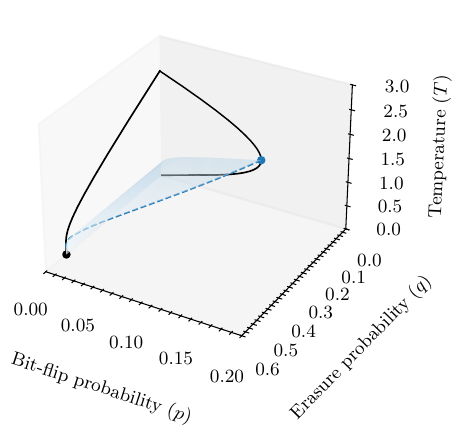}
    \caption{The predicted phase diagram incorporating bit-flips and qubit erasures using the replica trick. The intersection of the Nishimori surface, defined by $K=J\beta=\frac{1}{2}\log\left(\frac{1-p}{p}\right)$, with the phase diagram is indicated by the dashed blue line. The blue circle indicates the multicritical Nishimori point, and the black circle indicates the zero-temperature, bond-dilution critical point at $p=0,T=0,q=1/2$.}
    \label{fig:3d_phase_diagram}
\end{figure}

We project the intersection of the Nishimori surface, which is parametrized by $K_{pq}=\beta J=\frac{1}{2}\log\left(\frac{1-p}{p}\right)$ onto the $p,q$ plane, which represents the optimal threshold under mixtures of both bit-flips and qubit erasures. Here, $q$ denotes the erasure probability, while $p$ denotes the bit-flip probability conditioned on the qubit not being erased. With this convention the Nishimori conditions fix only the ratio of antiferromagnetic to ferromagnetic bonds. This differs from the convention of Ref.~\cite{Picco_2006}, where $p'$ denotes the unconditional antiferromagnetic bond probability; the two parameterizations are related by $p'=(1-q)p$. Our convention matches the mixed bit-flip and erasure setting of Ref.~\cite{PhysRevLett.102.200501}. This is plotted in Fig.~\ref{fig:optimal_bitflip_erasure}. Although Ref.~\cite{PhysRevLett.102.200501} considers the zero-temperature phase diagram of a surface code obtained using minimum-weight perfect matching decoding, the resulting boundary is strikingly similar to the present optimal decoding prediction: in both cases the loss threshold approaches $q=0.5$, and the phase boundary exhibits the same overall shape, including an apparent inflection point (at around $q\approx 0.4$).

\begin{figure}[t]
    \centering
    \includegraphics[width=\linewidth]{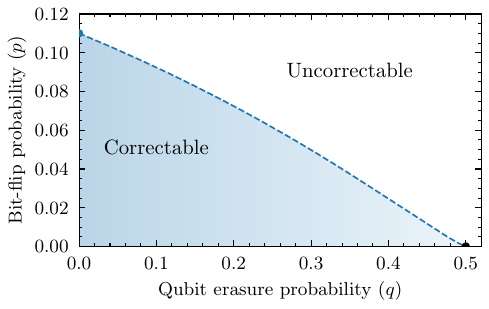}
    \caption{Correctability phase diagram obtained by the intersection of the Nishimori surface with the phase boundary predicted by the replica trick. The shaded region is correctable in the thermodynamic limit.}
    \label{fig:optimal_bitflip_erasure}
\end{figure}

\section{Concatenated quantum codes, renormalization group flow, and finite-size crossover}\label{sec:concat}

In Section~\ref{sec:symmetry} we showed that for the zero-rate \emph{em}-symmetric families considered here, the optimal code capacity threshold is strongly constrained by generalized Kramers-Wannier self-duality. This shifts the comparison between \emph{em}-symmetric topological and concatenated code families away from the threshold location itself and toward their finite-size behaviour below threshold. Concatenated codes are particularly convenient to analyse in this setting because under the statistical mechanical mapping of optimal decoding \cite{Dennis_02,Chubb_2021}, they generate classical models on hierarchical lattices. In the fully postselected limit, where the quenched disorder is removed, this hierarchical structure yields an exact real-space renormalization group (RG) flow \cite{PhysRevB.26.5022}.

In this section, we investigate the finite size behaviour of zero-rate \emph{em}-symmetric codes. We first show that optimal decoding of concatenated codes can be written as a hierarchical message-passing procedure, which reduces to an exact recursion on logical sector partition functions under full postselection. We then show that one of the main distinctions between topological and concatenated \emph{em}-symmetric codes is a tradeoff between distance scaling and the entropy of minimum-weight logical operators. Finally, we show how this tradeoff is manifest at finite size through crossover maps in physical overhead, including a band structure induced by the discrete coarse grained distances of concatenated families.

\subsection{Optimal decoding of concatenated codes as a renormalization group flow}\label{subsec:algorithm}

Optimal decoding of concatenated quantum error correcting codes was shown to be possible with a message passing algorithm that is linear in the number of physical qubits in Ref.~\cite{PhysRevA.74.052333}. For a code encoding a single logical qubit with bit-flip noise, optimal decoding in the code capacity setting reduces to comparing the posterior weights of the two logical cosets. In the statistical mechanical mapping this is equivalent to comparing the two sector partition functions $Z_{\vec{e},\vec{0}}(\theta;K_{p})$ and $Z_{\vec{e\mathcal{L}_{X}},\vec{0}}(\theta;K_{p})$ for a representative disorder instance $\vec{e}$ determined by the measured syndrome. When the code admits a transversal Pauli representative supported on all physical qubits, the two sectors obey the global symmetry
\begin{equation}
    Z_{\vec{e\mathcal{L}_{X}},\vec{0}}(\theta;K_{p})=Z_{\vec{e},\vec{0}}(\theta;-K_{p}),
\end{equation}
so it suffices to evaluate the partition function at equal magnitude positive and negative temperature: $Z_{\vec{e},\vec{0}}(\theta;K_{p})$ and $Z_{\vec{e},\vec{0}}(\theta;-K_{p})$.

We consider a seed code that maps to a generalized Ising model as in Eq.~\eqref{eq:GIM} with $N_s$ Ising spins with $S_{i}=\pm1$ and $N_{b}=n$ interaction terms (one per physical qubit of the seed code). For a single layer of concatenation, the Hilbert space of each physical qubit is replaced with the logical subspace of a copy of the base code. In doing so, the statistical mechanical model multiplies each term by a copy of the base Hamiltonian. Under concatenation, assuming a transversal Pauli logical operator, the Hamiltonian transforms as
\begin{equation}
    \mathcal{H}\to-\sum_{b=1}^{n}J\left(\sum_{b'=1}^{n}\prod_{j'}S_{j'}^{\theta_{jb}}\right)\left[\prod_{j}S_{j}^{\theta_{jb}}\right].
\end{equation}
This concatenation can also be captured by the transformation of $\theta$ expressed in Eq.~\eqref{eq:theta_concat}. We use this expression to motivate the decoding algorithm below.

We now consider $r$ levels of concatenation, where each term of a parent block is multiplied by an independent child block described by the same seed Hamiltonian. Conditioned on the parent spin configuration, the child blocks are independent. As a result, each child block can be summarized by a two component message
\begin{equation}
    u \defeq (Z_{\vec{e},\vec{0}}(\theta;K_{p}),Z_{\vec{e},\vec{0}}(\theta;-K_{p})),
\end{equation}
and parent messages are computed by combining the $n$ child messages through a constant size sum over the $2^{N_s}$ seed spin configurations. This is equivalent to the message-passing decoder of Ref.~\cite{PhysRevA.74.052333} expressed in the statistical mechanical language. Here, we outline a generic algorithm for optimal decoding of a single logical qubit quantum error correcting code in the bit-flip code capacity setting, assuming a transversal Pauli representative:

\begin{figure}
\begin{mdframed}[
  frametitle={Hierarchical maximum-likelihood decoder},
  roundcorner=2pt,
  innertopmargin=0.7\baselineskip,
  innerbottommargin=0.7\baselineskip,
  splittopskip=\baselineskip,
  splitbottomskip=\baselineskip,
]
\begin{itemize}[leftmargin=*,itemsep=0.25em]
\item \textbf{Inputs:} seed monomials $\{\prod_{j}S_{j}^{\theta_{jb}}\}_{b=1}^n$;
number of concatenation levels $r$; representative error $\vec{e}$ at the leaf couplings.
\item \textbf{Initialization ($\ell=1$):} for each of the $n^{r-1}$ leaf blocks, indexed by $l$, with disorder $\vec{e}$ compute
\begin{align}
    Z^{(1,l)}_{\vec{e},\vec{0}}(\theta;K_{p}) &= \sum_{S_{i}=\pm 1}\prod_{b=1}^{n}
    \exp\left(K_{p}(-1)^{e_{b}}\prod_{j}S_{j}^{\theta_{jb}}\right),\\
    Z^{(1,l)}_{\vec{e},\vec{0}}(\theta;-K_{p}) &= \sum_{S_{i}=\pm 1}\prod_{b=1}^{n}
    \exp\left(-K_{p}(-1)^{e_{b}}\prod_{j}S_{j}^{\theta_{jb}}\right).
\end{align}
Store the messages\\ $u^{(1,l)} = (Z_{\vec{e},\vec{0}}(\theta;K_{p}),Z_{\vec{e},\vec{0}}(\theta;-K_{p}))$.
\item \textbf{For each $\ell=2,\dots,r$:}
group the messages from $n$ children into each parent block. For a parent block with children, e.g., $u^{(1,1)},\dots,u^{(1,n)}$, compute its identity sector partition function
\begin{equation}
    Z_{\vec{e},\vec{0}}^{(\ell,l)}(\theta;K_{p})=\sum_{S_{i}=\pm1}
    \prod_{b=1}^n Z^{(\ell-1,l)}_{\vec{e},\vec{0}}(\theta;s(j,\vec{S})),
\end{equation}
where $s(j,\vec{S})=\pm K_{p}$ is selected by the seed monomial sign:
\begin{equation}
    s(j,\vec{S})=
    \begin{cases}
    K_{p},& \prod_{j}S_{j}^{\theta_{jb}}=+1,\\
    -K_{p},& \prod_{j}S_{j}^{\theta_{jb}}=-1.
    \end{cases}
\end{equation}
The $X$-sector partition function is obtained by swapping $K_{p}\leftrightarrow -K_{p}$ in the selector
(or equivalently, by using the transversal symmetry for the whole parent block):
\begin{equation}
    Z_{\vec{e},\vec{0}}^{(\ell,l)}(\theta;-K_{p})=\sum_{S_{i}=\pm 1}
    \prod_{b=1}^n Z^{(\ell-1,l)}_{\vec{e},\vec{0}}(\theta;\bar s(j,\vec{S})),
\end{equation}
with $\bar s$ the opposite sector. Output the parent message $u^{(\ell,l)}=(Z^{(\ell,l)}_{\vec{e},\vec{0}}(\theta;K_{p}),Z^{(\ell,l)}_{\vec{e},\vec{0}}(\theta;-K_{p}))$.
\item \textbf{Outputs:}
after $r$ levels, a single root message \\$u^{(r)}=(Z^{(r)}_{\vec{e},\vec{0}}(\theta;K_{p}),Z^{(r)}_{\vec{e},\vec{0}}(\theta;-K_{p}))$ remains. Correct into the logical coset represented by the largest partition function.
\end{itemize}
\end{mdframed}
\end{figure}

Let $N_{b}$ denote the total number of interaction terms (couplings) in the full concatenated Hamiltonian. For an $r$-level concatenation of a seed with $n$ couplings, $N_{b}=n^r$. Each block update enumerates $2^{N_s}$ seed spin configurations (with $N_{s}$ spins in the seed code) and performs $\mathcal{O}(n)$ multiplications, hence costs $\mathcal{O}(n 2^{N_s})$, which is a constant for a fixed seed code\footnote{This is not constant for concatenated quantum code constructions with seed codes whose number of physical qubits grows with the concatenation layer, such as the Hamming code construction in Refs.~\cite{Yamasaki2024,Yoshida2025}.}. The number of blocks across all levels is a geometric series
\begin{equation}
    \#\text{blocks}=\sum_{i=0}^{r-1} n^{i}=\frac{n^{r}-1}{n-1}=\mathcal{O}(N_{b}),
\end{equation}
so the total runtime per disorder instance (i.e., per fixed physical error sample) is
\begin{equation}
    T_{\mathrm{decode}}=\mathcal{O}(N_{b}),
\end{equation}
and the memory can be straightforwardly made $\mathcal{O}(N_{b})$. When estimating quenched averages by direct sampling over $N_{\mathrm{samp}}$ instances, the total runtime scales as $\mathcal{O}(N_{\mathrm{samp}} N_{b})$.

In the fully postselected limit, the disorder is removed and all couplings in the identity sector are ferromagnetic. In this case the recursion closes exactly on the logical sector partition functions, so the decoding problem becomes an analytically solvable real-space RG flow. We derive the corresponding recursion relations in Appendix~\ref{appendix:RG_flow}. This exact solvability makes concatenated codes a particularly clean setting in which to isolate the structural origin of the finite-size crossover discussed below.

\subsection{Geometry and entropy of topological and concatenated growth}\label{sec:comparison}

If the thresholds of the relevant code families are already strongly constrained by duality, what then distinguishes their sub-threshold behaviour? In the low error rate limit, the distance determines the sub-threshold scaling. However, at higher error rates in the sub-threshold regime one must also account for the multiplicity of the lowest-weight logical operators. We therefore compare both the distance and the entropy of minimum-weight logical operators for topological and concatenated growth.

Consider a family of stabilizer codes indexed by the number of physical qubits $n$. Let $d(n)$ denote the code distance and define the relative distance
\begin{equation}
    \delta(n)\defeq\frac{d(n)}{n}.
\end{equation}
We also introduce an entropy measure for the lowest-weight logical operators. Let $N_{\min}(n)$ be the number of distinct minimum-weight logical Pauli operators, and define
\begin{equation}
    S_{\min}(n)\defeq\log N_{\min}(n).
\end{equation}
We now compare the scaling of these quantities for topological and concatenated code families.
For a $D$-dimensional local stabilizer code on a lattice of linear size $L$, one generically has $n=\Theta(L^{D})$. Locality further imposes the upper bound $d=\mathcal{O}(L^{D-1})$ \cite{Bravyi_2009}, equivalently $d(n)=\mathcal{O}(n^{\frac{D-1}{D}})$. In $D=1,2,3,4,5$ this bound is tight~\cite{Williamson2024,yuan2026}. For the standard $2D$ toric/surface code families, we have
\begin{equation}
    n=\Theta(L^{2}),\qquad d=\Theta(L),
\end{equation}
so that
\begin{equation}
    d(n)=\Theta(\sqrt{n}),\qquad\delta(n)=\Theta(n^{-1/2}).
\end{equation}

A common feature of 2D topological codes is that the number of minimum-length nontrivial logical strings is at most polynomial in $L$, and hence polynomial in $n$. For example, in an $L\times L$ toric code the shortest non-contractible loops may be translated across the lattice, yielding only $\mathcal{O}(L)$ distinct translates in each homology class. Thus one expects, schematically,
\begin{align}\label{eq:entropy_topological}
    N_{\min}(n)&\sim \poly(L)\sim \poly(\sqrt n),\nonumber\\
    S_{\min}(n)&=\log N_{\min}(n)\sim \mathcal{O}(\log n),
\end{align}
that is, only subextensive entropy for minimum-weight logical operators.

Now, consider a concatenated family built from a fixed seed code $[[n_0,1,d_0]]$. After $r$ levels of concatenation,
\begin{equation}
    n=n_0^r,\qquad d=d_0^r.
\end{equation}
Eliminating $r=\log_{n_0}(n)$ gives
\begin{align}
    d(n)&=n^{\log_{n_0}(d_0)},\nonumber\\
    \delta(n)&=n^{\log_{n_0}(d_0)-1}\xrightarrow[n\to\infty]{}0,
\end{align}
since $d_0<n_0$ implies $\log_{n_0}(d_0)<1$. For the concatenated $[[9,1,3]]$ surface-17 code seed, one obtains $d(n)=n^{\log_9 3}$ and $\delta(n)\sim n^{-1/2}$, matching the distance scaling of a 2D topological code. For other seeds, such as Steane's $[[7,1,3]]$ code, one obtains the improved scaling \(d(n)=n^{\log_7 3}\approx n^{0.565}\).

If distance scaling were the only factor determining performance, concatenated codes would therefore appear at least competitive with topological codes, and in some cases asymptotically superior. Let us next examine how the entropy of minimum-weight logical operators can change this picture.

Let $L_0$ be the number of minimum-weight logical Pauli operators of the seed. A minimum-weight logical operator at level $r$ is obtained by choosing a minimum-weight outer logical and then, for each of the $d_0$ affected blocks, choosing a minimum-weight logical operator of the level-$(r-1)$ code. This yields the recursion relation 
\begin{equation}
    N_{\min}^{(r)} = L_0\left(N_{\min}^{(r-1)}\right)^{d_0},
    \qquad N_{\min}^{(1)}=L_0,
\end{equation}
whose closed-form solution is
\begin{align}
    N_{\min}^{(r)}
    &=L_0^{\frac{d_0^r-1}{d_0-1}}
    =\exp\left(\frac{\log L_0}{d_0-1}d_0^r+\mathcal{O}(1)\right)\nonumber\\
    &=\exp\left(s_0 d(n)+\mathcal{O}(1)\right),
    \qquad
    s_0\defeq\frac{\log L_0}{d_0-1}.
\end{align}
Equivalently,
\begin{align}
    N_{\min}(n)&\sim \exp\left(s_0 n^{\log_{n_0}(d_0)}\right),\nonumber\\
    S_{\min}(n)&\sim s_0 n^{\log_{n_0}(d_0)}.
\end{align}
This uncovers the key distinction between concatenated and topological codes. In 2D topological codes, minimum-weight logical operators are expected to have subextensive entropy, $S_{\min}(n)\sim \mathcal{O}(\log n)$. In concatenated codes, the same quantity grows linearly on the distance scale, $S_{\min}(n)\sim \mathcal{O}(d(n))$. Thus concatenated growth produces exponentially many minimum-weight failure mechanisms on the code-distance scale.

In summary, both topological and concatenated code families satisfy $\delta(n)\to0$, but with very different scaling of the low-weight logical entropy.
\begin{align}\label{eq:parameter_scaling}
    \text{2D topological:}\quad & d(n)\sim n^{1/2},\nonumber\\
    &\delta(n)\sim n^{-1/2},\nonumber\\
    &S_{\min}(n)\sim \mathcal{O}(\log(n)),\nonumber\\
    \text{Concatenated:}\quad & d(n)\sim n^{\log_{n_0}(d_0)},\nonumber\\
    &\delta(n)\sim n^{\log_{n_0}(d_0)-1},\nonumber\\
    &S_{\min}(n)\sim \mathcal{O}(d(n)).
\end{align}

In the path-counting regime \cite{Dennis_02,Watson_2014,Fowler_12,Beverland_2019}, this distinction enters directly into the leading estimate of the logical failure probability:
\begin{equation}\label{eq:path_counting}
    \mathbb{P}_{\mathrm{fail}}(p,d)\approx N_{\min}(d)\binom{d}{\lceil d/2\rceil} p^{\lceil d/2\rceil}.
\end{equation}
The much larger growth of $N_{\min}(d)$ in concatenated families therefore produces an entropic penalty. This means that at fixed distance, concatenated families can support many more low-weight failure mechanisms than Euclidean topological ones, even when their distance scaling with $n$ is favorable. 
This suggests the following tradeoff. At fixed distance $d$, topological codes are expected to exhibit a larger sub-threshold decay rate under optimal decoding. At fixed physical overhead $n$, however, a concatenated family with sufficiently favorable distance scaling can eventually compensate for this entropic penalty. This competition naturally leads to a crossover scale in physical overhead.

We now estimate this crossover scale in the path-counting limit. Using Stirling's approximation for the central binomial coefficient,
\(
\binom{d}{\frac{d}{2}}=\Theta(2^{d+1}/\sqrt{2\pi d})
\),
Eq.~\eqref{eq:path_counting} becomes, up to polynomial factors,
\begin{equation}\label{eq:stirling_simple}
    \mathbb{P}_{\mathrm{fail}}(n,p)\approx \frac{N_{\min}(n)}{\poly(d(n))}\,(2\sqrt p)^{d(n)}.
\end{equation}

For a planar surface code encoding one logical qubit, $N_{\min}^{\mathrm{surf}}(n)=\poly(d_{\mathrm{surf}}(n))$, so
\begin{equation}\label{eq:pfail_surf_simple}
    \log \mathbb{P}_{\mathrm{fail}}^{\mathrm{surf}}(n,p)=\sqrt n\,\log(2\sqrt p)+\mathcal{O}(\log n).
\end{equation}

For the concatenated Steane family,
\begin{align}
    &n=7^r,\qquad d=3^r,\qquad d_{\mathrm{con}}(n)=n^\kappa,\nonumber\\
    &\kappa\defeq\log_7 3\approx0.565.
\end{align}
Moreover, $N_{\min}^{\mathrm{con}}(n)\sim \exp(s_0 d_{\mathrm{con}}(n))$, so
\begin{equation}\label{eq:pfail_con_simple}
    \log \mathbb{P}_{\mathrm{fail}}^{\mathrm{con}}(n,p)=n^\kappa\big(\log(2\sqrt p)+s_0\big)+\mathcal{O}(\log n).
\end{equation}

Now we define the crossover length $n^*(p)$ by equating the leading terms in Eqs.~\eqref{eq:pfail_surf_simple} and \eqref{eq:pfail_con_simple}. This is the number of physical qubits at fixed $p$ above which the failure rate of the concatenated Steane code is expected to be lower than the planar surface code in the path-counting regime. This crossover length is given by
\begin{equation}\label{eq:nstar_equation_simple}
    \sqrt{n^*}\,|\log(2\sqrt p)|\approx (n^*)^\kappa |\log(2\sqrt p)+s_0|.
\end{equation}
This yields
\begin{equation}\label{eq:nstar_closed_simple}
    n^*(p)\approx\left[\frac{|\log(2\sqrt p)|}{|\log(2\sqrt p)+s_0|}\right]^{\frac{1}{\kappa-\frac12}}.
\end{equation}
Since $\kappa-\tfrac12\approx0.065$, the exponent $(\kappa-\tfrac12)^{-1}\approx15.5$ is large, so the crossover is extremely sensitive to subleading constants, especially the entropic penalty $s_0$, even though $d_{\mathrm{con}}(n)$ grows asymptotically faster than $d_{\mathrm{surf}}(n)$. We show $n^*(p)$ in Fig.~\ref{fig:nstar_plot}. As $p\to0$, the ratio inside Eq.~\eqref{eq:nstar_closed_simple} approaches $1$, and hence $n^*(p)\to1$.

\begin{figure}[t]
    \centering
    \includegraphics[width=\linewidth]{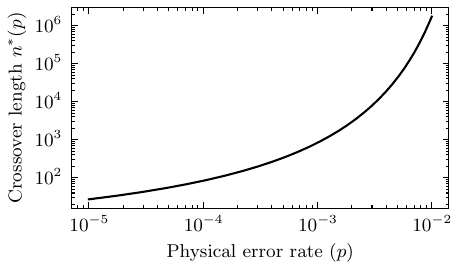}
    \caption{Estimated crossover length \(n^*(p)\) for the concatenated Steane code relative to a planar surface code in the path-counting regime. For \(n\gtrsim n^*(p)\), the improved distance scaling of the concatenated family is expected to overcome its entropic penalty.}
    \label{fig:nstar_plot}
\end{figure}

\subsection{Finite-size crossover and band structure}\label{subsec:crossover}

We now turn from the asymptotic scaling picture to the actual finite-size crossover in physical overhead. In this section, we only examine the fully postselected limit of the codes considered to enable exact analytic results. We compare the physical overhead of concatenated Steane codes and toric codes. Here, we consider postselected codes, as numerical estimates of the failure rates required for non-postselected codes do not readily allow for accurate extrapolation to arbitrary code distances, as noted by the system size dependence of the decay rate observed in Ref.~\cite{Beverland_2019}.

For a target logical failure rate $\epsilon$ and physical error rate $p$, let
\begin{equation}
    n_{\mathrm{fam}}(p,\epsilon)\defeq \min\{n:\mathbb{P}_{\mathrm{fail}}^{\mathrm{fam}}(n,p)\le \epsilon\}
\end{equation}
denote the smallest block size in a given code family achieving the target. We compare code families through the quantity
\begin{equation}
    \Delta(p,\epsilon)\defeq \log\left(\frac{n_{\mathrm{steane}}(p,\epsilon)}{n_{\mathrm{toric}}(p,\epsilon)}\right),
\end{equation}
so that $\Delta<0$ indicates that the concatenated Steane code uses fewer physical qubits, while $\Delta>0$ indicates that the toric code is more efficient.

\begin{figure}[t]
    \centering
    \includegraphics[width=\linewidth]{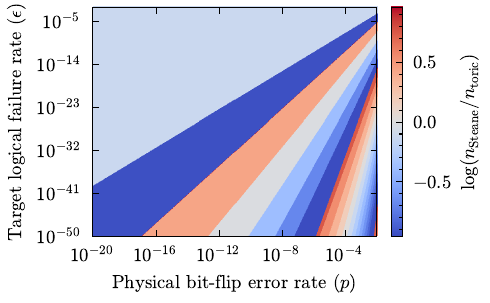}
    \caption{Code family crossover in the fully postselected setting. Blue indicates the concatenated Steane code requires fewer physical qubits than the toric code to achieve a target logical failure rate; red indicates the converse.}
    \label{fig:postselect_crossover}
\end{figure}

Fig.~\ref{fig:postselect_crossover} shows that the competition between the two code families is not described by a single monotone boundary at finite size, but instead by a banded structure. The bands arise due to the large discrete steps in system size that occur in concatenated code growth. A concatenated family only realizes the sequence of block sizes $n=n_0^r$ and distances $d=d_0^r$, so the attainable logical failure rates change in a coarse grained manner as the concatenation level increases. By contrast, topological growth proceeds through a denser sequence of distances. As the target logical failure rate is varied, the optimal concatenation level changes discontinuously, producing alternating regions in which one code family or the other minimizes the physical overhead.

This interpretation is consistent with the scaling analysis of the previous subsection. The asymptotic estimate $n^*(p)$ describes the crossover number of physical qubits in the path-counting regime, while the individual bands arise from the discreteness of the different code distances. Moreover, because Eq.~\eqref{eq:nstar_closed_simple} implies $n^*(p)\to1$ as $p\to0$, the multiple bands are expected to collapse toward a single asymptotic band in the deep path-counting regime. In this sense, the low-$p$ limit is controlled by a single crossover scale, while the richer band structure seen at finite $p$ reflects finite-size effects specific to topological and concatenated codes.

Taken together, these results show that once the threshold is constrained by duality, the relevant distinction between topological and concatenated code families is a finite-size tradeoff of code distance and multiplicity of logical operators. Topological codes benefit from a much smaller entropy of minimum-weight logical operators, while concatenated codes can eventually leverage superior distance scaling with $n$. The resulting competition is thus naturally expressed as a crossover in physical overhead.

\section{Thresholds of quantum two-block group algebra codes}\label{sec:BB_codes}

In this section, we discuss an extension of the duality-constrained threshold phenomenology developed above beyond topological and concatenated codes to more general qLDPC code families. The preceding arguments do not rely on geometric locality, nor on any structure specific to concatenated codes. Rather, they require only three key ingredients: \emph{em}-symmetry of the CSS codes, asymptotically vanishing rate, and the existence of a unique decoding threshold. The structure of abelian two-block group algebra (A2BGA) codes~\cite{PhysRevA.109.022407}, which include the bivariate bicycle (BB) codes~\cite{Bravyi2024}, makes them a natural class of qLDPC codes to which we can apply these arguments. We conclude the section by examining a family of BB codes numerically under code capacity bit-flip noise.

For A2BGA codes, the CSS check matrices take the block form
\begin{equation}
    H_X = [A\ |\ B],
    \qquad
    H_Z = [B^{T}\ |\ A^{T}],
\end{equation}
where \(A\) and \(B\) are elements of an abelian group algebra. Up to exchanging the two block components, \(H_Z\) is therefore related to \(H_X\) by row and column permutations, and hence \emph{em}-symmetric. Thus A2BGA codes are natural candidates for applying the results of Theorem~\ref{thm:1}.

Next, we invoke the higher-dimensional parent code picture of Refs.~\cite{hopkin2026,sym14071348,shaw2026}, where translation-invariant qLDPC codes with fixed check weight were shown to arise as compactifications of local fracton codes in higher dimensions. In particular, A2BGA codes were shown to be compactifications of hypergraph product (HGP) fracton codes~\cite{f48m-rlh3}. 
This implies that all BB codes with a given fixed check weight can be viewed as descendants of a single parent model which, on an infinite lattice, one may consider as the thermodynamic limit for the whole family.

The compactified parent code construction implies that A2BGA code families with constant check weight $w$ obey the code parameter tradeoff bound for local codes in a higher dimension~\cite{PhysRevLett.104.050503}
\begin{align}
    k d^{\frac{2}{D-1}} \leq O(n),
\end{align}
where $D=w-2$. 
Due to this tradeoff, any A2BGA code with constant check weight and polynomially growing distance has asymptotically vanishing rate. 
Hence, our arguments apply to any A2BGA code family with constant check weight and polynomially growing distance, provided there is a unique error correction threshold. For such zero-rate \emph{em}-symmetric families, the same logic as in Corollary~\ref{thm:2} implies that sector-mixing defects are subextensive, so that the clean model is thermodynamically self-dual and the critical point is pinned to the self-dual temperature. In this case, the corresponding disordered phase boundary above the Nishimori line is again constrained by the principal Boltzmann factor construction.

As a special case, any BB code specified by fixed finite generating polynomials can be placed on larger and larger tori by increasing the cycle lengths while keeping the polynomial supports fixed. In this limit the stabilizer interaction pattern remains finite range in the resulting translation-invariant lattice model. Consequently, such a thermodynamic limit should be understood as a geometrically local parent model rather than as a genuinely non-local qLDPC code family. In particular, fixed-polynomial compactifications remain in the same broad topological code setting discussed in Refs.~\cite{PhysRevB.99.245135,rmy6-9n89,86j7-cmsw}.

To investigate A2BGA codes beyond this topological setting, we instead consider sequences in which the polynomial supports themselves grow with the code size. This is necessary to achieve performance beyond that of a 2D topological code as the family is scaled up. We let
\[
    A_n = 1+f_n(\mathbf{x}),\qquad B_n = 1+g_n(\mathbf{y}),
\]
where \(f_n\) and \(g_n\) are finite sums of monomials whose degrees increase along the sequence. Using the notation of Yoshida~\cite{PhysRevB.88.125122}, the CSS stabilizer checks of the corresponding BB code family may be written as
\begin{equation}
    X\begin{pmatrix}
        1+f_n(\mathbf{x})\\
        1+g_n(\mathbf{y})
    \end{pmatrix},
    \qquad
    Z\begin{pmatrix}
        \overline{1+g_n(\mathbf{y})}\\
        \overline{1+f_n(\mathbf{x})}
    \end{pmatrix},
\end{equation}
where the bar denotes the group algebra involution, sending each translation variable to its inverse. The condition
\[
    \deg(f_n),\deg(g_n) \to \infty
\]
should be interpreted as increasing the range of the stabilizer pattern in the chosen compactified presentation. This distinguishes the sequence from a fixed local parent model on a growing torus.

The relevance of this distinction is that the duality argument developed above does not require geometric locality. Rather, it requires only \emph{em}-symmetry, vanishing asymptotic rate, and a unique thermodynamic transition. A2BGA codes, and their BB code subclass, satisfy the first condition by construction. A2BGA codes with fixed check weight and polynomially growing distance also satisfy the second condition. Hence, these duality arguments apply to any A2BGA code with fixed check weight, a polynomially growing distance, and a unique threshold. This is expected to capture all A2BGA code families that are being considered as candidates for qLDPC architectures. 

\subsection{Fully postselected A2BGA codes}

We first study the fully postselected limit of A2BGA codes. Full postselection removes the quenched disorder by conditioning on the trivial syndrome, so the decoding problem reduces to comparing logical sector partition functions of the clean statistical mechanical model. This setting is useful because the duality prediction is exact: if the clean model is thermodynamically self-dual and has a unique transition, then the transition occurs at the self-dual coupling given in Eq.~\eqref{eq:K_c}. Under the Nishimori conditions, this corresponds to
\[
    p_c=\frac{1}{2+\sqrt{2}}\approx 0.2929.
\]

Because the BB code instances considered below encode different numbers of logical qubits, the block logical failure rate \(\mathbb{P}_{\mathrm{fail}}\) is not the most convenient quantity for comparing finite sizes. We therefore use the per-logical proxy
\begin{equation}\label{eq:per_logical_proxy}
    y_n(p)\defeq 1-(1-\mathbb{P}_{\mathrm{fail}})^{1/k}.
\end{equation}
This is the effective identical per-logical failure probability that would reproduce the same total block success probability if the \(k\) logical degrees of freedom failed independently and symmetrically. When the logical failure rate is small,
\[
    y_n(p)\approx \frac{\mathbb{P}_{\mathrm{fail}}}{k},
\]
so this normalization removes the leading trivial dependence on the number of encoded logical qubits. In the high-temperature limit, where all \(2^k\) logical sectors become asymptotically equiprobable, this proxy tends to \(1/2\), as expected for random guessing of each logical degree of freedom.

We characterize the non-locality of the BB presentation by the maximum degree of the polynomials. For the polynomials
\[
    A(x,y)=\sum_{(a,b)\in S_A} x^a y^b,
    \qquad
    B(x,y)=\sum_{(a,b)\in S_B} x^a y^b,
\]
where $S_{A} (S_{B})$ denote the set of terms in the A(B) blocks, we define the range
\begin{equation}
    \rho \defeq \max(\deg(A),\deg(B)),
\end{equation}
using the non-negative exponent representatives appearing in the chosen presentation. Equivalently, the range \(\rho\) is the largest total monomial degree appearing in either generator. For a code defined on an $\ell\times m$ torus, this range may not be the minimal diameter of the checks since exponents are only defined modulo the relations \(x^\ell=y^m=1\). Rather, it is a convenient presentation-dependent upper bound on the minimum range required to implement the code that is expected to become tight when the degree of the interaction polynomials is sufficiently smaller than the length of the periodic boundary conditions. 
An increasing range indicates that a given sequence of codes is not geometrically local in a fixed two-dimensional embedding.

We investigate a family of weight-6 BB codes, and obtain the fully postselected logical failure rates per-logical by estimating the partition functions in each logical sector using a Markov chain Monte Carlo (MCMC) with population annealing, see Appendix~\ref{app:MCMC_pop_annealing} for details. The code parameters for the codes studied are listed in Tab.~\ref{tab:BB6_parameters}. These examples were chosen to include increasing polynomial range while keeping the stabilizer weight fixed.

\begin{table}[]
    \centering
    \begin{tabular}{c|c|c|c}
    \hline
    $[[n,k,d]]$ & $A$ & $B$ & $\rho$ \\
    \hline
    $[[18,4,4]]$ & $1+x+y$ & $1+x^2+y^2$ & $2$ \\
    $[[36,4,6]]$ & $x+y^2+y^3$ & $1+y+x^2$ & $3$ \\
    $[[54,4,8]]$ & $x+y+y^3$ & $1+y^2+x^2$ & $3$ \\
    $[[28,6,4]]$ & $1+xy+xy^3$ & $1+xy+y^3$ & $4$ \\
    $[[30,4,6]]$ & $1+xy+(xy)^2$ & $1+(xy)^2+xy^2$ & $4$
    \end{tabular}
    \caption{Bivariate bicycle codes used in the population annealing simulations. The polynomial interaction range $\rho$ is the largest total monomial degree appearing in $A$ or $B$.}
    \label{tab:BB6_parameters}
\end{table}

We show in Fig.~\ref{fig:BB_codes_clean} the per-logical proxy \(y_n(p)\), see Eq.~\eqref{eq:per_logical_proxy}, for this family under full postselection. For each value of \(p\), the clean inverse temperature is set by the Nishimori conditions, see Eq.~\eqref{eq:nishimori_conditions}. The logical sector partition functions are estimated separately using population annealing MCMC and the block failure probability is computed from the normalized weight outside the identity logical sector,
\[
    \mathbb{P}_{\mathrm{fail}}(p)
    =
    1-\frac{Z_{\vec{0}}(\theta;K_{p})}{\sum_{\lambda\in\mathbb{F}_2^k}Z_{\lambda}(\theta;K_{p})}.
\]

\begin{figure}[t]
    \centering
    \includegraphics[width=\linewidth]{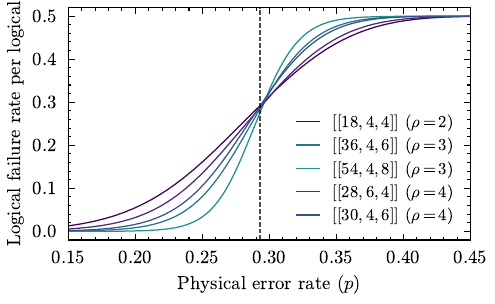}
    \caption{Per-logical proxy failure rate \(y_n(p)=1-(1-\mathbb{P}_{\mathrm{fail}})^{1/k}\) for a family of weight-6 bivariate bicycle codes under full postselection with code capacity bit-flip noise. The vertical dashed line marks the clean self-dual prediction \(p_c=1/(2+\sqrt{2})\approx0.2929\). The approximate finite-size crossing near this value is consistent with thermodynamic self-duality of zero-rate \emph{em}-symmetric CSS families.}
    \label{fig:BB_codes_clean}
\end{figure}

The curves cross close to the predicted value $p_c\approx0.2929$, consistent with the clean self-duality prediction. The residual spread of the crossings should be interpreted as a finite-size and sampling effect rather than as evidence for distinct asymptotic critical points. In particular, these are small non-geometrically local instances, and the population annealing estimates are most delicate near the transition.

We next consider Haah's cubic code \cite{PhysRevA.83.042330}, a prototypical type-II fracton code that is also an A2BGA code. Its associated clean statistical mechanical model is known as the fractal Ising model~\cite{PhysRevB.94.155128,PhysRevB.94.235157}, and is self-dual, possessing a unique critical self-dual coupling as in Eq.~\eqref{eq:K_c} \cite{PhysRevResearch.6.013304}. This model is numerically challenging because of its fractal constraints and slow equilibration near the transition \cite{canossa2025}. We therefore restrict to small lattice sizes with logical qubits $k\leq 6$ \cite{PhysRevA.83.042330}. Fig.~\ref{fig:Haah_code_clean} shows the fully postselected logical failure rates for the smallest available instances in this subsequence. The data are again obtained using population annealing, with numerical details given in Appendix~\ref{app:MCMC_pop_annealing}. Although the accessible system sizes are small, the approach to a step function centered on the predicted self-dual value provides an additional test of the same duality-constrained clean critical point.

\begin{figure}[t]
    \centering
    \includegraphics[width=\linewidth]{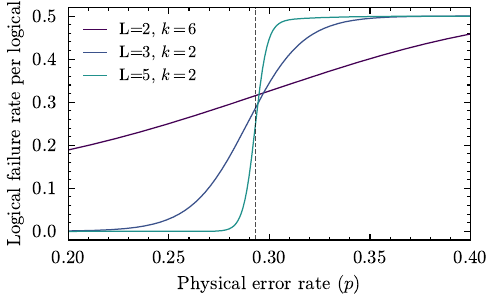}
    \caption{Per-logical proxy failure rate \(y_n(p)=1-(1-\mathbb{P}_{\mathrm{fail}})^{1/k}\) for Haah's cubic code under full postselection with code capacity bit-flip noise. The vertical dashed line marks the clean self-dual prediction \(p_c=1/(2+\sqrt{2})\approx0.2929\). The approach to a step function is consistent with the predicted self-dual critical point.}
    \label{fig:Haah_code_clean}
\end{figure}

With clean critical points constrained, and similarly the optimal non-postselected thresholds constrained at $p_{c}\approx 0.11$ by the principal Boltzmann factor construction, we next investigate whether similar constraints are placed on practical sub-optimal decoders without postselection. 

\subsection{Non-postselected A2BGA codes}

We now turn to the non-postselected setting, and consider statistical mechanics models where a quenched average is taken over the full disorder landscape. We investigate practical decoding obtained using belief propagation with ordered statistics decoding (BP+OSD) \cite{PhysRevResearch.2.043423}. Belief propagation can also be viewed as obtaining a ``mean-field'' approximation to the optimal decoding problem \cite{midha2026}. Consequently, one would not expect such BP-based decoders to achieve the optimal threshold of QEC codes. However, as BP+OSD remains a practical, fast (time complexity polynomial in $n$) and general decoder for qLDPC codes, it allows us to probe the practical ramifications of such optimally constrained thresholds. In the following results, we use the \emph{LDPC} library to implement the BP+OSD algorithm \cite{Roffe_LDPC_Python_tools_2022}.

In Fig.~\ref{fig:BB_codes_BPOSD}, we simulate the logical failure rate of the weight-6 BB codes with $k=12$ introduced in Ref.~\cite{symons2025} without postselection. We use larger codes in this setting as the smaller codes we investigated in the fully postselected case were significantly affected by finite-size effects. We use $N_{\mathrm{samp}}=10^{5}$ samples for each code and employ the min-sum variant of BP with a maximum number of iterations $n_\mathrm{iter}$ equal to the block sizes of the respective codes $(n)$, with a ``combination sweep'' strategy and order-$0$ postprocessing. The min-sum scaling factor was set at 0.625 \cite{Hillmann2025}. We perform a data collapse by fitting a second-order polynomial to the scaling variable $(p-p_{c})n^{1/\nu}$, where $p_{c}$ is the threshold, $n$ is the code block size (as the precise distances of the codes are not known), and $\nu$ is an effective finite-size rescaling exponent. While the data collapse remains imperfect, the crossing point can be estimated at $p_{c}\approx0.0838(6)$.

\begin{figure}[t]
    \centering
    \includegraphics[width=\linewidth]{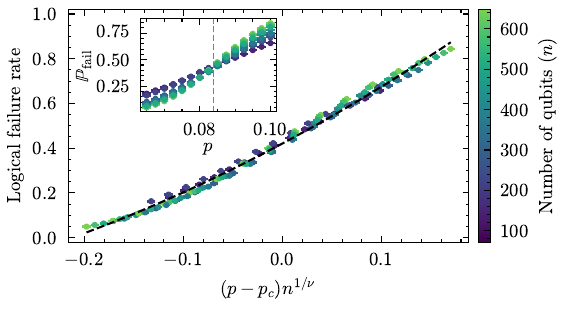}
    \caption{Data collapse of the logical failure rate of the $k=12$ family of weight-6 BB codes from Ref.~\cite{symons2025} with bit-flip noise and no postselection using BP+OSD. \emph{Inset:} the data prior to rescaling.}
    \label{fig:BB_codes_BPOSD}
\end{figure}

In Fig.~\ref{fig:Haahs_code_BPOSD}, we simulate the logical failure rate of Haah's cubic code with periodic boundary conditions on odd lattice sizes with $k=2$ without postselection. Again, we use larger code distances than were studied in the fully postselected setting to mitigate finite-size effects. In these results, we use BP+OSD with OSD-0 postprocessing, with the ``min-sum'' variant of BP and a fixed maximum number of iterations $n_{\mathrm{iter}}=60$. The min-sum scaling factor was set at 0.625. We show a data collapse of the logical failure rate of the odd $L$ under BP+OSD with OSD-0 according to the scaling parameter $x=(p-p_{c})L^{1/\nu}$, where $p_{c}$ is the threshold, and $\nu$ is a finite-size rescaling exponent. The data collapse is again imperfect, but the crossing point can be estimated at $p_{c}\approx 0.0797(4)$.

\begin{figure}[t]
    \centering
    \includegraphics[width=\linewidth]{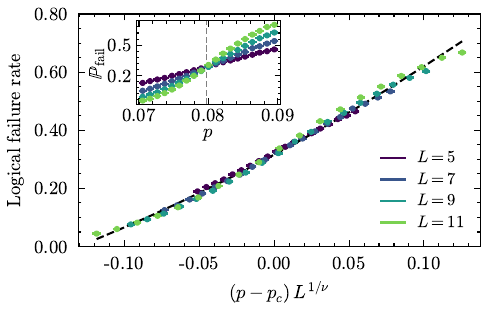}
    \caption{Data collapse of Haah's cubic code with bit-flip noise and no postselection using BP+OSD. \emph{Inset:} the data prior to rescaling.}
    \label{fig:Haahs_code_BPOSD}
\end{figure}

In both cases, the threshold under bit-flip code capacity noise with BP+OSD-0 is around 8\%, which is indeed lower than the optimal prediction of $p_{c}\approx 0.11$ by the arguments above. Moreover, this threshold is also lower than that of the toric code under BP+OSD-0, which was found to possess a numerical threshold of $p_{c}=9.9\%\pm 0.2\%$ \cite{PhysRevResearch.2.043423}. Despite optimal thresholds being largely aligned, the thresholds achieved by practical suboptimal decoders therefore appear to significantly vary. With recent research into belief propagation bounds on tensor network contraction, can we design codes purposefully amenable to BP-based decoding? We leave this question to future work.

\section{Discussion and outlook}\label{sec:discussion}

A recurring theme in fault-tolerant quantum computing is the pursuit of ever-higher thresholds. Our results suggest a complementary perspective: for a broad class of zero-rate \emph{em}-symmetric CSS code families, the optimal code capacity threshold is strongly constrained. While many duality arguments in the statistical mechanical literature are formulated directly in the thermodynamic limit, our derivation required keeping the logical operators and their associated sectors exactly at finite system size before taking the infinite size, zero-rate limit. This makes it possible to see how generalized Kramers-Wannier duality mixes logical sectors, and why this mixing becomes subextensive for zero-rate families. This highlights the value of bringing a QEC perspective to such statistical mechanical problems.

These results shift the natural optimization target from asymptotic threshold values to finite-size performance.
The relevant distinguishing characteristic of a code family becomes the rate at which logical failures are suppressed as the physical error rate is scaled below threshold.
In the present work, we demonstrated that this distinction is governed by a tradeoff between geometry and entropy. The entropy captures the number of low-weight logical operators, and both this entropy and the distance of such logical operators can be determined by the geometry of the code family, which together determine sub-threshold performance.
For example, topological growth produces only polynomially many minimum-weight logical operators, so the entropy of near-minimal failures remains subextensive. By contrast, concatenated growth generically produces an exponential proliferation of minimal logical operators, leading to an entropy that is extensive. As a result, even when two code families share essentially the same threshold, their sub-threshold logical suppression can differ substantially at practical finite sizes.

We emphasize that the above duality constraints are intrinsically tied to zero-rate families. When the asymptotic rate is nonzero, the thermodynamic limit retains an asymptotically large sector structure, and the divergence from self-duality remains extensive (see Corollary~\ref{thm:2}). In this case, the Nishimori point is no longer pinned by the zero-rate condition, and there is no reason to expect different constant rate constructions to align at the same critical physical error rate. Surface codes on self-dual hyperbolic tilings already illustrate this issue, as described in Section~\ref{subsec:phase_diagram_constraints}: despite having constant rate and self-dual microscopic structure, their observed thresholds do not appear to follow the simple finite-rate hashing bound prediction. A more precise theory which incorporates the mixing of logical sectors under generalized Kramers-Wannier duality could provide a more useful statistical mechanical framework for asymptotically good qLDPC codes as well. This leaves open the possibility that good qLDPC families, including constant rate constructions, can outperform two-dimensional topological growth at practical target logical failure rates and overhead.

Future work should extend this duality framework beyond code capacity noise. Phenomenological noise models with noisy syndrome measurements map to higher-dimensional disordered statistical mechanical models, and some of these may possess their own self-dual sectors. For example, the four-dimensional spacetime model associated with the 3D toric code under phenomenological bit-flip noise and measurement errors has its optimal threshold fixed at $p_c \approx 0.11$ by self-duality \cite{Xu_2026}. Treating the mixing of electric and magnetic insertions under this duality more carefully in the phenomenological setting might provide insights for predicting the scaling of logical failure rates. Another avenue for investigation is to extend these results to circuit-level noise. 

It would also be interesting to apply a similar statistical mechanical perspective to the resources required for universal fault tolerance, which has been increasingly investigated in the literature~\cite{xu2026,aitchison2026}. Magic state distillation protocols already have a natural interpretation as recursive maps on noisy resource states, analogous to an RG flow, while transversal gate and non-abelian preparation schemes shift the cost into the structure of the code or state preparation procedure~\cite{PhysRevA.71.022316,PhysRevLett.98.160502,vrty-qs5h}. A statistical mechanical framework comparing these approaches in terms of thresholds, finite-size overheads, and sub-threshold logical suppression could help clarify which approach is most efficient over relevant parameter regimes. 

Finally, although the replica trick and principal Boltzmann factor approximation give remarkably accurate predictions for the phase boundaries across many \emph{em}-symmetric CSS codes, this approximation is not controlled in general. The conditions under which this approximation becomes unreliable should be made precise. Identifying codes for which the principal Boltzmann factor construction does not give a good approximation to the phase boundary could also potentially lead to codes whose thresholds deviate strongly from the hashing bound prediction.

\vspace{2em}
\begin{acknowledgments}
We thank Christopher Chubb, Alexander Cowtan, Timo Hillmann, Grace Sommers and Nicholas O’Dea for helpful discussions.  This work is supported by the ARO through the IARPA ELQ program W911NF-23-2-0223. 
DJW is supported by the Australian Research Council Discovery Early Career Research Award (DE220100625).
\end{acknowledgments}

\bibliography{apssamp}% Produces the bibliography via BibTeX.

\appendix
\widetext

\section{Determination of code phase diagrams}

In this section, we detail the numerical determination of the phase diagram in Fig.~\ref{fig:phase_diagram}. We use the algorithm in Section~\ref{subsec:algorithm} to estimate the partition functions in the logical identity sector $Z_{\vec{e},\vec{0}}(\theta;K,p)$ and the logical $X$ sector $Z_{\vec{e},\vec{0}}(\theta;-K,p)$ for given samples with disorder drawn from i.i.d. $\mathbb{P}(e_{b}=1)$ with probability $p$ and $\mathbb{P}(e_{b}=0)$ with probability $(1-p)$. Above the Nishimori line, we fix values of $p$, and sweep $T$ in a window around an initial $T_{0}$, estimated from the $T_{c}(p)$ of the square-lattice random bond Ising model taken from literature \cite{PhysRevB.65.054425,Wang_03}.

To determine the phase diagram of the models, we do not enforce the Nishimori conditions. Rather, we independently sample $(-1)^{e_{b}}=\pm 1$ with the antiferromagnetic coupling probability $p$, and we fix $T\in[0,\infty)$. The algorithm in Section~\ref{subsec:algorithm} is straightforwardly adapted to separate $p$ and $T$, i.e., off the Nishimori line. We define a sample-dependent ``failure probability'' whose quenched average along the Nishimori line becomes the optimal logical failure rate, see Eq.~(5) of Ref.~\cite{Chen_2025}
\begin{equation}
    \mathbb{P}_{\mathrm{fail}}(\vec{e},p,T)=1-\frac{\max(Z_{\vec{e},\vec{0}}(\theta;K,p),Z_{\vec{e},\vec{0}}(\theta;-K,p))}{Z_{\vec{e},\vec{0}}(\theta;K,p)+Z_{\vec{e},\vec{0}}(\theta;-K,p)}.
\end{equation}
The disorder-averaged order parameter is then
\begin{equation}
\mathbb{P}_{\mathrm{fail}}(p,T)
=\langle\mathbb{P}_{\mathrm{fail}}(\vec{e},p,T) \rangle_{\vec{e}}.
\end{equation}

Numerically, for each fixed $p$ we evaluate $\mathbb{P}_{\mathrm{fail}}(p,T)$ over a temperature grid and concatenation levels $r\in\{2,3,4,5\}$ (equivalently $d=3^r$), with typically $N_{\mathrm{samp}}=10^5$ samples per $(p,T)$ point in the temperature-sweep runs. We then extract the critical temperature $T_c(p)$ via finite-size scaling collapse with the ansatz \cite{Wang_03}
\begin{equation}
\mathbb{P}_{\mathrm{fail}}(p,T)=
f\!\left((T-T_c)\,d^{1/\nu}\right),
\end{equation}
and approximate $f$ by a quadratic polynomial,
\begin{equation}
f(x)\approx A x^2 + B x + C.
\end{equation}
The parameters $(T_c,\nu,A,B,C)$ are obtained by minimizing the weighted least squares objective
\begin{equation}
\chi^2=
\sum_i
\frac{\left[y_i-(A x_i^2+B x_i+C)\right]^2}{\sigma_i^2},
\qquad
x_i=(T_i-T_c)\,d_i^{1/\nu},
\end{equation}
where $y_i=\mathbb{P}_{\mathrm{fail},i}$ and $\sigma_i$ is the Monte Carlo standard error. Repeating this procedure for multiple fixed $p$ yields the phase boundary $T_c(p)$ in the $(p,T)$ plane. 

We plot in Figs.~\ref{fig:threshold_RSC} and \ref{fig:threshold_Steane} data collapse plots of $\mathbb{P}_{\mathrm{fail}}(p,T)$ along the Nishimori line (i.e., under optimal MLD) for the concatenated surface-17 code and concatenated Steane code, respectively. We use the algorithm detailed in Section~\ref{subsec:algorithm} to obtain the logical failure rate.

\begin{figure*}[t]
    \centering
    \subfloat[\label{fig:threshold_RSC}]
    {\includegraphics[width=0.5\linewidth]{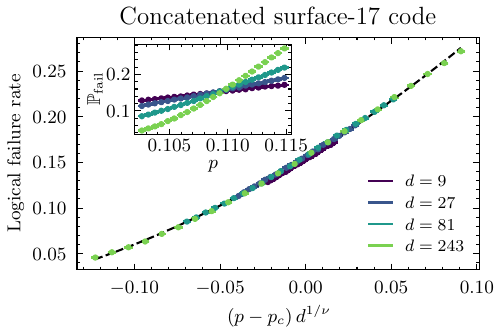}}
    \hfill
    \subfloat[\label{fig:threshold_Steane}]
    {\includegraphics[width=0.5\textwidth]{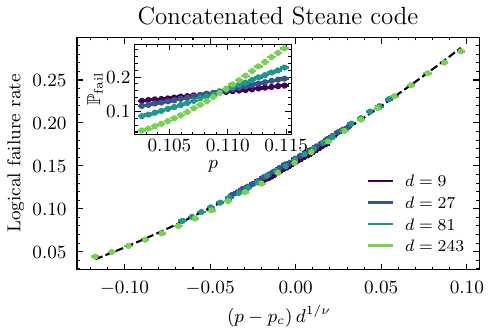}}

    \caption{
    Logical failure rate $\mathbb{P}_{\mathrm{fail}}$ for the concatenated codes under bit-flip noise in the code capacity setting, plotted as a function of the scaling variable $x=(p-p_{c})d^{1/\nu}$ to demonstrate data collapse, where $d$ is the code distance.
    (a) Concatenated surface-17 code.
    (b) Concatenated Steane code.
    In both panels, the dashed black curve shows the best fit to a second order polynomial in $x$, treating the three polynomial parameters, the critical error rate $p_{c}$, and the critical exponent $\nu$ as free fit parameters.
    The scatter points represent numerical simulation results with $N_{\mathrm{samp}}=10^{5}$ samples.}
    \label{fig:curvefit_combined_concat}
\end{figure*}

For completeness, we also plot in Figs. \ref{fig:threshold_rotated_surface} and \ref{fig:threshold_color} data collapse plots of $\mathbb{P}_{\mathrm{fail}}(p,d)$ along the Nishimori line (i.e., under optimal MLD) for the topological rotated surface code and (6.6.6) color code, respectively. The (6.6.6) color code failure rate (which is only approximately optimal because of finite bond truncation) is computed using the Python package \emph{qecsim} \cite{qecsim}, with an MPS decoder with bond dimension $\chi=10$. The rotated surface code failure rate is computed using the Python package \emph{planar} (optimal) \cite{PhysRevLett.134.190603}.

\begin{figure*}[t]
    \centering
    \subfloat[\label{fig:threshold_rotated_surface}]
    {\includegraphics[width=0.5\textwidth]{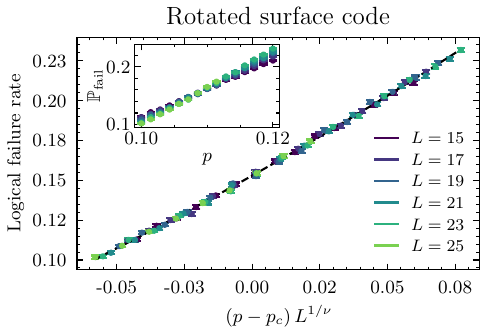}}
    \hfill
    \subfloat[\label{fig:threshold_color}]
    {\includegraphics[width=0.5\linewidth]{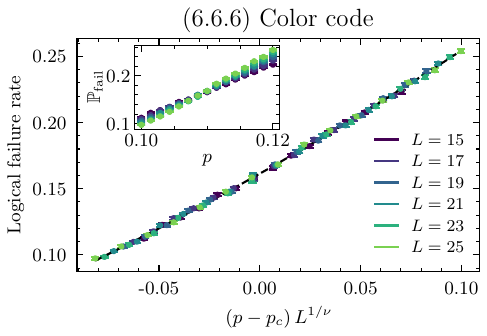}}

    \caption{
    Logical failure rate $\mathbb{P}_{\mathrm{fail}}$ for the topological codes under bit-flip noise in the code capacity setting, plotted as a function of the scaling variable $x=(p-p_{c})L^{1/\nu}$ to demonstrate data collapse, where $L$ is the code distance.
    (a) Rotated surface code.
    (b) $(6.6.6)$ color code.
    In both panels, the dashed black curve shows the best fit to a second order polynomial in $x$, treating the three polynomial parameters, the critical error rate $p_{c}$, and the critical exponent $\nu$ as free fit parameters.
    The scatter points represent numerical simulation results with $N_{\mathrm{samp}}=10^{5}$ samples.}
    \label{fig:curvefit_combined_topo}
\end{figure*}

\section{Concatenated codes and real-space renormalization group flow}\label{appendix:RG_flow}

\begin{theorem}\label{eq:theorem1}
Let $\mathcal{C}$ be a fully postselected $[[n,1,d]]$ stabilizer code in the code capacity bit-flip setting. Suppose its logical $\overline{X}$ has a transversal representative so that the two logical sectors satisfy $Z_{\vec{1},\vec{0}}(\theta;K_{p})=Z_{\vec{0},\vec{0}}(\theta;-K_{p})$, and concatenation acts by independent hierarchical substitution of the base interaction. Then concatenation defines an exact real-space renormalization group recursion for the clean statistical mechanical model.

If $f(p)=\mathbb{P}_{\mathrm{fail}}^{(1)}(p)$ denotes the base code logical failure map, then after $r$ levels of concatenation,
\[
\mathbb{P}_{\mathrm{fail}}^{(r)}(p)=f^{\circ r}(p).
\]
\end{theorem}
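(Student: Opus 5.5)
We write the level-$r$ clean model as a hierarchical model and sum out the innermost blocks level by level. The key object is the two-component message $u^{(\ell)}=(Z^{(\ell)}_0,Z^{(\ell)}_1)$, with $Z^{(\ell)}_0=Z^{(\ell)}_{\vec 0,\vec 0}(\theta;K_p)$ and $Z^{(\ell)}_1=Z^{(\ell)}_{\vec 1,\vec 0}(\theta;K_p)$ for the level-$\ell$ code. Under full postselection all leaf couplings are ferromagnetic, so every block at a given level carries the same message. By hypothesis concatenation substitutes a copy of the base interaction into each bond. Conditioned on the parent spins, the child blocks are therefore independent, and the parent monomial $\prod_j S_j^{\theta_{jb}}=\pm1$ only selects which logical sector of child $b$ contributes. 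Transversality, via $Z_1=Z_{\vec 0,\vec 0}(\theta;-K_p)$, lets us implement this selection as a sign flip of the child coupling, exactly as in the hierarchical decoder of Section~\ref{subsec:algorithm}.

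\textbf{Step 1 (exact decimation).} We show that
\[
Z^{(\ell+1)}_{a}=\sum_{S}\prod_{b=1}^{n} Z^{(\ell)}_{a\oplus \sigma_b(S)},\qquad \sigma_b(S)=\tfrac{1}{2}\big(1-\textstyle\prod_j S_j^{\theta_{jb}}\big),
\]
where the sector $a=1$ arises from flipping all outer bonds with the transversal $\overline X$. No approximation enters, because summing out the internal spins of each child block produces exactly its sector partition function; there are no loops between distinct child blocks. This is the exact real-space RG statement.

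\textbf{Step 2 (reduction to one effective coupling).} Define the effective coupling $K^{(\ell)}$ by $e^{-2K^{(\ell)}}=Z^{(\ell)}_1/Z^{(\ell)}_0$. Each factor in Step 1 then equals $Z^{(\ell)}_0$ times $\exp\!\big(K^{(\ell)}(\prod_j S_j^{\theta_{jb}}-1)\big)$, i.e.
\[
\sqrt{Z^{(\ell)}_0Z^{(\ell)}_1}\,\exp\!\big(K^{(\ell)}\textstyle\prod_j S_j^{\theta_{jb}}\big).
\]
Hence
\[
Z^{(\ell+1)}_a=\big(Z^{(\ell)}_0Z^{(\ell)}_1\big)^{n/2}\,Z^{(1)}_a\big(K^{(\ell)}\big).
\]
That is, the level-$(\ell+1)$ sector functions are, up to an $a$-independent prefactor, the base-code sector functions evaluated at the renormalized coupling $K^{(\ell)}$. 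This gives the RG map $K^{(\ell+1)}=R(K^{(\ell)})$, with $R(K)$ defined by $e^{-2R(K)}=Z^{(1)}_1(K)/Z^{(1)}_0(K)$.

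\textbf{Step 3 (translation to failure probabilities).} Under postselection the failure probability is $\mathbb P^{(\ell)}_{\rm fail}=Z^{(\ell)}_1/(Z^{(\ell)}_0+Z^{(\ell)}_1)$, since the identity sector dominates below $1/2$. On the Nishimori line, $e^{-2K_p}=p/(1-p)$, so $p$ and $K$ are in bijection via $p=1/(1+e^{2K})$. Under this bijection $p^{(\ell)}:=1/(1+e^{2K^{(\ell)}})$ equals $\mathbb P^{(\ell)}_{\rm fail}$. Step 2 then reads $p^{(\ell+1)}=f(p^{(\ell)})$, where
\[
f(p)=\frac{Z^{(1)}_1(K_p)}{Z^{(1)}_0(K_p)+Z^{(1)}_1(K_p)}=\mathbb P^{(1)}_{\rm fail}(p).
\]
Induction from $p^{(0)}=p$ gives $\mathbb P^{(r)}_{\rm fail}=f^{\circ r}(p)$.

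\textbf{Main obstacle.} The delicate step is justifying Step 1: that the concatenated $\theta$ really factorizes as independent child blocks glued only through parent monomials. This requires choosing the outer-level stabilizer spins so that each outer bond couples to the child's logical spin rather than to individual physical qubits. I would handle it by using the transversal representative to write the outer $Z$-checks as products over child-block transversal operators, so that each outer spin enters each child block through a global sign of that block's couplings. I would also check the boundary case that the normalizations $N_g$ do not spoil the sector ratio; they cancel in the ratio, so they do not.
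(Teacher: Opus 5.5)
Your proposal is correct and is essentially the paper's argument. You sum out each child block conditioned on the parent monomial, write the resulting two-valued local weight as $\sqrt{Z_0Z_1}\,e^{K'\sigma}$, cancel the prefactor in the sector ratio, and use the Nishimori form $p=1/(1+e^{2K})$ to identify the renormalized coupling with the inner failure rate. The only difference is that you put the base code on top with level-$\ell$ children, giving $\mathbb{P}_{\mathrm{fail}}^{(\ell+1)}=f(\mathbb{P}_{\mathrm{fail}}^{(\ell)})$, whereas the paper's $\theta^{(r)}$ decimates base-code leaf blocks and writes out only $r=2$; the two orderings give the same code by associativity of the Kronecker structure, and your version makes the induction explicit.
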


Here we outline a proof of this theorem, showing one level of concatenation renormalizes the coupling constant with a constant prefactor. The same argument applies to establish that the result holds inductively for higher levels of concatenation. 

\begin{proof}
For an $[[n,1,d]]$ seed code with transversal logical Pauli representatives, the interaction matrix (parity-check matrix) $\theta$ transforms under a single step of concatenation through
\begin{equation}\label{eq:theta_concat}
    \theta^{(r)}=\begin{bmatrix}
        \theta^{(r-1)}\otimes \vec{1}_{N_{b}}^{\top} \\
        \mathds{1}_{N_{b}^{(r-1)}}\otimes \theta
    \end{bmatrix}
    \end{equation}
where $\vec{1}_{N_{b}}$ is a $N_{b}\times 1$ column vector of ones, and $\mathds{1}_{N_{b}^{(r-1)}}$ is the $N_{b}^{(r-1)}\times N_{b}^{(r-1)}$ identity. As each interaction at the base level is replaced with a copy of the original Hamiltonian, we obtain the form above. Under concatenation we observe $N^{(r)}_{s}= N_{s}^{(r-1)}+N_{b}^{(r-1)}N_{s}$ and $N_{b}^{(r)}= (N_{b}^{(r-1)})N_{b}=(N_{b})^{r}$. In other words, in each concatenation level, each physical qubit is replaced with a logical qubit comprised of $N_{b}$ physical qubits. We retain all of the original stabilizers, but add $N_{s}$ new stabilizers onto each of the original physical qubits.

For a single layer of concatenation, we have for the logical identity coset
\begin{equation}\label{eq:concatenated_ZI}
    Z^{(2)}_{\vec{0},\vec{0}}(\theta;K_{p})=\sum_{S_{i}^{(2)}=\pm 1}\sum_{S_{i}^{(1)}=\pm 1}\prod_{b^{(2)}=1}^{n}\exp\left(\prod_{j}S_{j}^{\theta_{jb}}K_{p}\left[\sum_{b^{(1,b^{(2)})}=1}\prod_{j^{(2,b^{(2)})}}^{n}S_{j^{(2,b^{(2)})}}^{\theta_{j^{(2,b^{(2)})}b^{(2,b^{(2)})}}}\right]\right),
\end{equation}
where the $S_{i}^{(2)}$ are the outer spins, and the $S_{i}^{(1)}$ are the inner spins. When the outer spin is fixed in some configuration $\vec{S_{i}}^{(2)}$ such that $\prod_{j}S_{j}^{\theta_{jb}}=+1$, the local Boltzmann factor (after summing over the inner spins $S_{i}^{(1)}$) can be seen to be equal to $Z_{\vec{0},\vec{0}}(\theta;K_{p})$, and when $\prod_{j}S_{j}^{\theta_{jb}}=-1$, the Boltzmann factor equals $Z_{\vec{0},\vec{0}}(\theta;-K_{p})$. In other words, when the outer spins multiply to give $+1$, the local Boltzmann factor is equal to the inner model's identity partition function, and when they multiply to give $-1$, we obtain the inner model's logical $X$ partition function.

Next, let us write down this local Boltzmann factor
\begin{equation}
    x(\prod_{j}S_{j}^{\theta_{jb}})=Ae^{K_{p}'\prod_{j}S_{j}^{\theta_{jb}}}
\end{equation}
where $A$ is some factor and $K'_{p}$ is the renormalized dimensionless coupling constant. We find $A$ by multiplying the two Boltzmann factors
\begin{equation}
    x(+1)x(-1)=A^{2}=Z_{\vec{0},\vec{0}}(\theta;K_{p})Z_{\vec{0},\vec{0}}(\theta;-K_{p})\implies A=\sqrt{Z_{\vec{0},\vec{0}}(\theta;K_{p})Z_{\vec{0},\vec{0}}(\theta;-K_{p})}.
\end{equation}
And by dividing them we obtain 
\begin{align}
    x(+1)/x(-1)=e^{2K'_{p}}=\frac{Z_{\vec{0},\vec{0}}(\theta;K_{p})}{Z_{\vec{0},\vec{0}}(\theta;-K_{p})}\implies K_{p}^{'}=\frac{1}{2}\log\left(\frac{Z_{\vec{0},\vec{0}}(\theta;K_{p})}{Z_{\vec{0},\vec{0}}(\theta;-K_{p})}\right).
\end{align}
Noting that $\mathbb{P}_{\mathrm{fail}}=\frac{Z_{\vec{0},\vec{0}}(\theta;-K_{p})}{Z_{\vec{0},\vec{0}}(\theta;+K_{p})+Z_{\vec{0},\vec{0}}(\theta;-K_{p})}$, we have $\frac{1-\mathbb{P}_{\mathrm{fail}}}{\mathbb{P}_{\mathrm{fail}}}=\frac{Z_{\vec{0},\vec{0}}(\theta;K_{p})}{Z_{\vec{0},\vec{0}}(\theta;-K_{p})}$. In other words, the Nishimori conditions have the same functional form but instead of the i.i.d. bit-flip probability $p$, we have the logical failure rate of the inner code $\mathbb{P}_{\mathrm{fail}}(p)$. Now accounting for all $n$ copies of the code on the inner layer, we can rewrite Eq.~\eqref{eq:concatenated_ZI} through
\begin{align}
    Z^{(2)}_{\vec{0},\vec{0}}(\theta;K_{p})&=\sum_{S_{i}^{(2)}=\pm 1}\prod_{b^{(2)}=1}^{n}\sqrt{Z_{\vec{0},\vec{0}}(\theta;+K_{p})Z_{\vec{0},\vec{0}}(\theta;-K_{p})}\exp(K'_{p}\prod_{j}S_{j}^{\theta_{jb}})\\
    &=\left(\sqrt{Z_{\vec{0},\vec{0}}(\theta;+K_{p})Z_{\vec{0},\vec{0}}(\theta;-K_{p})}\right)^{n}\sum_{S_{i}^{(2)}=\pm 1}\prod_{b^{(2)}=1}^{n}\exp(K'_{p}\prod_{j}S_{j}^{\theta_{jb}})
\end{align}
where the same argument gives a similar formula for the logical $X$ coset, i.e., $Z_{\vec{0},\vec{0}}^{(2)}(\theta;-K_{p})$. In taking the failure rate through the ratio of partition functions, the prefactor cancels and we are left with an exact RG flow with $\mathbb{P}_{\mathrm{fail}}^{(2)}(K_{p})=\mathbb{P}_{\mathrm{fail}}^{(1)}(K'_{p})$. In other words, $\mathbb{P}_{\mathrm{fail}}^{(2)}(p)=\mathbb{P}_{\mathrm{fail}}(\mathbb{P}_{\mathrm{fail}}(p))$.

\end{proof}

The intuition we take from this analysis is that code concatenation renormalizes the effective coupling $K_{p}$, while the functional form of the failure probability remains that of the base code, and the prefactors generated by integrating out inner degrees of freedom cancel exactly in the logical failure ratio. A corollary of this is that we can obtain the threshold through the unstable fixed point of the RG flow of either the coupling constant $K_{p}$ or the flow of $\mathbb{P}_{\mathrm{fail}}$. This RG flow of $\mathbb{P}_{\mathrm{fail}}$ can schematically be depicted as in Fig.~\ref{fig:RGflow}.

\begin{figure}
    \centering
            \begin{tikzpicture}[
  thick,
  >=Latex,
  dot/.style={circle, inner sep=0pt, minimum size=4pt},
]
  % axis
  \draw (0,0) -- (6,0);

  % fixed points
  \node[dot, fill=black, label=below:{$0$}]            (p0)  at (0,0) {};
  \node[dot, draw=black, fill=white, label=below:{$p_c$}] (pc)  at (3,0) {};
  \node[dot, fill=black, label=below:{$\tfrac12$}]     (ph)  at (6,0) {};

  % RG flow arrows (away from p_c)
  \draw[->] (pc) ++(-0.2,0.30) - ++(-2.6,0);
  \draw[->] (pc) ++( 0.2,0.30) - ++( 2.6,0);
\end{tikzpicture}
    \caption{Renormalization group flow of $\mathbb{P}_{\mathrm{fail}}$, possessing three fixed points: $\mathbb{P}_{\mathrm{fail}}=0$ (stable; below threshold), $\mathbb{P}_{\mathrm{fail}}=1/2$ (stable; above threshold), $\mathbb{P}_{\mathrm{fail}}(p)=p=:p_{c}$ (unstable; threshold).}
    \label{fig:RGflow}
\end{figure}

\section{Population annealing simulations for A2BGA codes}\label{app:MCMC_pop_annealing}

In this appendix we detail the numerical simulation procedure used to obtain the logical failure rates in the fully postselected limit for the A2BGA codes investigated in Section~\ref{sec:BB_codes}.  The calculation is performed in the clean statistical mechanical model obtained after full postselection.  For each code, we estimate the partition functions in the different logical sectors using population annealing Markov chain Monte Carlo (MCMC), and we direct the reader to Ref.~\cite{PhysRevE.92.063307} for a pedagogical introduction to the technique.

Population annealing was run independently in each logical sector.  We initialized a population of $N_{R}$ replicas at $K=0$, where the distribution is uniform, and annealed the population through a nonuniform grid of $N_{K}$ values of $K$ between $K_{\min}=0$ and $K_{\max}=0.9$. At each step $K_{\tau}\to K_{\tau+1}$, replicas were reweighted by
\begin{equation}
    w_{j}=\exp\left[-(K_{\tau+1}-K_{\tau}) E_{j}\right],
\end{equation}
where $E_{j}$ is the energy of replica $j$ according to the Hamiltonian expressed through Eq.~\eqref{eq:GIM}.  The population was then resampled using systematic resampling, followed by a fixed number of Metropolis sweeps at the new coupling. The free energy increments accumulated during the reweighting steps give an estimate of $\log Z_{\lambda}(\theta;K)$ for each sector.

After estimating all logical sector partition functions, we compute the sector probabilities
\begin{equation}
    \mathbb{P}_{\lambda}(\theta;K)=\frac{Z_{\lambda}(\theta;K)}{\sum_{\lambda'} Z_{\lambda'}(\theta;K)} .
\end{equation}
The fully postselected block failure probability is then
\begin{equation}
    \mathbb{P}_{\mathrm{fail}}(\theta;K) = 1 - \mathbb{P}_{\vec{0}}(\theta;K).
\end{equation}

The numerical parameters used for the population annealing simulations are listed in Tab.~\ref{tab:pa_simulation_parameters}.  Here $N_{R}$ is the population size, $n_{\mathrm{sw}}$ is the number of Metropolis sweeps after each resampling step, $N_{K}$ is the number of annealing temperatures, and $N_{\mathrm{PA}}$ is the number of independent population annealing replicates.  The simulations were run independently for all $2^{k}$ logical sectors, except for the $L=2$ Haah code point,
which was evaluated by exact enumeration.

\begin{table*}[t]
    \begin{ruledtabular}
    \begin{tabular}{c|c|c|c|c|c|c}
        Code & $k$ & Sectors & $N_{\mathrm{PA}}$ & $N_{R}$ & $n_{\mathrm{sw}}$ & $N_K$ \\
        \hline
        \multicolumn{7}{c}{Haah's cubic code} \\
        \hline
        $L=2$ & 6 & 64 & -- & -- & -- & -- \\
        $L=3$ & $2$ & $4$ & $1$ & $32768$ & $20$ & $800$ \\
        $L=5$ & $2$ & $4$ & $1$ & $262144$ & $50$ & $1500$ \\
        \hline
        \multicolumn{7}{c}{Bivariate bicycle codes} \\
        \hline
        $[[18,4,4]]$ & $4$ & $16$ & $1$ & $32768$ & $20$ & $800$ \\
        $[[28,6,4]]$ & $6$ & $64$ & $1$ & $65536$ & $30$ & $1000$ \\
        $[[30,4,6]]$ & $4$ & $16$ & $1$ & $65536$ & $30$ & $1000$ \\
        $[[36,4,6]]$ & $4$ & $16$ & $1$ & $65536$ & $40$ & $1200$ \\
        $[[54,4,8]]$ & $4$ & $16$ & $1$ & $131072$ & $60$ & $1600$ \\
    \end{tabular}
    \end{ruledtabular}
    \caption{
    Population annealing parameters used to estimate fully postselected logical failure rates.  Simulations were performed on a piecewise uniform grid in $K\in[0,0.9]$, with increased resolution near the transition region.  Each logical sector was simulated independently.
    }
    \label{tab:pa_simulation_parameters}
\end{table*}

\end{document}